\newtheorem{definition}{Definition}
\newtheorem{lemma}{Lemma}
\begin{document}

\title{D2D-aided LoRaWAN LR-FHSS in Direct-to-Satellite IoT Networks 
\thanks{Alireza Maleki, Ha H. Nguyen, and Ebrahim Bedeer are with the Department of Electrical and Computer Engineering, University of Saskatchewan, Saskatoon, Canada S7N5A9. Emails: \{alireza.maleki, ha.nguyen, and e.bedeer\}@usask.ca.}
\thanks{R. Barton is with Cisco Systems Inc. Emails: robbarto@cisco.com.}
\thanks{This work was supported by NSERC/Cisco Industrial Research Chair program.}
\thanks{The co-authors dedicate this paper to Prof. Ha H. Nguyen, who passed away before the submission of the paper.}
}
\author{Alireza Maleki, Ha H. Nguyen, Ebrahim Bedeer, and Robert Barton}

\maketitle

\begin{abstract}
In this paper, we present a device-to-device (D2D) transmission scheme for aiding long-range frequency hopping spread spectrum (LR-FHSS) LoRaWAN protocol with application in direct-to-satellite IoT networks. We consider a practical ground-to-satellite fading model, i.e. shadowed-Rice channel, and derive the outage performance of the LR-FHSS network. With the help of network coding, D2D-aided LR-FHSS transmission scheme is proposed to improve the network capacity for which a closed-form outage probability expression is also derived. The obtained analytical expressions for both LR-FHSS and D2D-aided LR-FHSS outage probabilities are validated by computer simulations for different parts of the analysis capturing the effects of noise, fading, unslotted ALOHA-based time scheduling, the receiver's capture effect, IoT device distributions, and distance from node to satellite. The total outage probability for the D2D-aided LR-FHSS shows a considerable increase of $249.9\%$ and $150.1\%$ in network capacity at a typical outage of $10^{-2}$ for DR6 and DR5, respectively, when compared to LR-FHSS. This is obtained at the cost of minimum of one and maximum of two additional transmissions per each IoT end device imposed by the D2D scheme in each time-slot.
\end{abstract}
\begin{IEEEkeywords}
Capture effect, D2D, LEO satellites, LoRa, LoRaWAN, LR-FHSS, network coding.
\end{IEEEkeywords}

\section{Introduction}

\IEEEPARstart{T}{he} Internet of Things (IoT) is a sizable application domain that has recently been formed thanks to advancements in automation and miniaturization. Among different types of IoT applications, massive machine-type communication (mMTC), mostly refers to data-collecting applications where a large number of low-power endpoints (such as inexpensive sensors), are used to continually and sparingly (usually sporadically) transfer small amounts of data to the cloud or fusion centers \cite{Vaezi_2022}. Various low-power wide-area network (LPWAN) solutions, such as LoRaWAN \cite{LoRa_White}, SigFox \cite{SigFox}, and the third-generation partnership project (3GPP)-approved narrowband IoT (NB-IoT) \cite{NBIoT}, have been suggested as candidate solutions to meet such needs. For example, in LPWANs, the supported link range is often in the order of few kilometers and the low cost of the LPWAN network infrastructure generally comes at the trade-off of extremely low bit rates, message rates, and frame payload sizes \cite{Malik_2022,Stusek_2022}.

In this paper, we focus on LoRaWAN which is rapidly gaining industrial popularity throughout the globe. The LoRa Alliance\footnote{https://lora-alliance.org/} established LoRaWAN, the network protocol, where LoRa refers to the chirp spread spectrum (CSS) technology utilized at the physical layer (patented by Semtech). Despite being an open standard and easy to implement, the LoRaWAN protocol has its limits, particularly in dense network installations where the mandated duty cycle and the ALOHA-based medium access control (MAC) protocol severely restrict the overall network capacity. On the other hand, due to an increasing demand for the deployment of numerous IoT devices, also known as end devices (EDs), in various locations on Earth, traditional terrestrial LoRa networks may become extremely dense in some cases (such as offshore wind farm monitoring using sensors installed on the wind turbines) \cite{Ullah_2021}. 

The above-mentioned issues give rise to the introduction of direct-to-satellite IoT (DtS-IoT) communication \cite{Fraire_2019,Palattella_2018}. Particularly, DtS-IoT lowers the reliance on ground gateways (GWs) and, as a result, makes connectivity to remote locations simpler. Technologies such as LoRaWAN and NB-IoT have been investigated to demonstrate their feasibility of being exploited for low-power DtS-IoT access mode \cite{Doroshkin_2019,Ouvry_2018,Deng_2017}. New start-up businesses like Wyld\footnote{https://www.wyldnetworks.com/} and Lacuna Space\footnote{https://lacuna.space/} are working on deploying specific satellite constellations for IoT and have already shown how a LoRaWAN GW can be integrated on a low earth orbit (LEO) satellite to enable worldwide IoT connection \cite{Alvarez_2022,Afhamisis_2022}. Additionally, compared to traditional satellite networks, the use of nano-satellites along with LPWAN technologies over a satellite link offers more affordable and delay-tolerant IoT connectivity solutions \cite{Almonacid_2017,Akyildiz_2019}. More recently, to solve the extremely long-range and large-scale communication scenarios introduced by DtS-IoT access mode, a new PHY layer transmission called long-range frequency hopping spread spectrum (LR-FHSS) \cite{9,10} is developed, which is now on the agenda for upcoming space IoT initiatives \cite{Alvarez_2022}. The core of LR-FHSS is a fast frequency hopping method that offers the same radio link budget as LoRa while enabling increased network capacity.  

The use of LR-FHSS is specifically demonstrated in \cite{10} to support total loads at peak efficiency of $3.5$ million and $1.48$ million packets/hour for two data rates specified in the Europe frequency band, namely DR8 and DR9, respectively. These results reflect around $36$-time and $15$-time capacity increases compared to the greatest data rate (DR0) of $96000$ packets/hour offered by conventional LoRa. It is worth mentioning that the authors in \cite{10} assumed ideal channel conditions with no fading and noise presence. The authors in \cite{12} develop a mathematical model for packet delivery probability that only considers the frequency hopping algorithm of the LR-FHSS scheme and ignores the effects of fading and path-loss, which are unreliable in real-world applications. Without offering any theoretical derivations, the simulator model just takes into account the capture effect (fading and path loss included). Also noteworthy is the fact that this approach fully disregards the noise effect. To overcome these shortcomings, the authors in \cite{Self_2022} presented a mathematical analysis framework taking into account the path-loss, channel fading, and noise and verified the obtained closed-form expression for outage probability of LR-FHSS by simulation results. The results show a significant improvement in terms of network capacity compared to LoRa networks in a dense DtS IoT scenario. However, the channel fading used in \cite{Self_2022} is modeled by the Nakagami-m distribution which only depicts the network performance in ``good'' ground-to-satellite link state \cite{15}. Therefore, an analysis with a complete ground-to-satellite channel modeling is still lacking. A scalable and energy-efficient DtS-IoT is investigated in \cite{Alvarez_2022} where the LR-FHSS physical layer is included in the study. It specifically suggests uplink transmission policies that make use of satellite trajectory data. The theoretical mixed integer linear programming (MILP) model that frames these schemes serves as both an upper constraint on performance and a tool for developing scheduled DtS-IoT solutions. According to the provided results, which are solely based on a simulation tool developed in Python, trajectory-based policies can improve the network capacity and in certain cases, can increase scalability without costing extra energy. Most recently, the authors in \cite{Ullah_2022} support the potential use of mMTC and LPWAN technologies for DtS communications in Arctic autonomous ships applications. Simulations are performed based on actual ship and satellite positions, traffic patterns, and the LoRaWAN connectivity model to test this theory and examine the possible performance and impact of various design and configuration options. The results show the viability of the proposed strategy and indicate how various parameters affect the connectivity performance for both the traditional LoRa and the LR-FHSS schemes. Notably, the packet delivery is significantly enhanced by combining multiple LoRaWAN LPWAN connections with multiple satellite visibility. Considering these works, one can see that a complete analytical model taking into account a practical ground-to-satellite channel fading for the use of LR-FHSS in DtS IoT network is still missing in the literature. In the first part of this paper, we develop such an analytical model capturing all practical aspects of DtS IoT link.

To keep up with the rapid growth of IoT network implementations in today's world with an impressive rate of $12\%$ annually (estimated $125$ billion connected IoT EDs in 2030 \cite{IoTNum}), the need for more sizable network coverages is inevitable. On the other hand, low deployment costs have been a major factor in the development of massive IoT sensor networks \cite{Potdar_2009}. In terms of network design, this poses a dilemma for the densification of IoT gateways: while a low density of gateways can significantly lower deployment costs, it makes it nearly impossible for all IoT nodes to have dependable access to the network \cite{Vaezi_2022}. Although in LPWANs, IoT devices are directly connected to a radio GW via a simple star network topology, for low-cost wireless IoT networks, this has primarily motivated wireless engineers to create mesh-type network topologies to facilitate exploiting device-to-device (D2D) data communication schemes. In essence, each communication node in the network participates in both the communication of its own data and the support of the transfer of data from other nodes.

To cope with the mentioned challenges, D2D relaying has been researched and specified for many different wireless technologies for a long time to extend coverage, improve the battery life of sensors, and enable out-of-coverage communication between devices. The current LoRaWAN specification does not feature a mechanism to enable D2D communication among IoT EDs \cite{Mikhaylov_2017}; however, one may expect that D2D will be included in the specification in the near future to allow EDs to transmit their data cooperatively and benefit from the spatial diversity provided by emulating the impacts of multiple antennas in a network composed of single antenna devices. The authors in \cite{Mikhaylov_2017} encourage the use of D2D communications in a LoRaWAN network. A network-aided D2D communication protocol is proposed and its viability is demonstrated by layering it on top of a commercial transceiver that has received LoRaWAN certification. From the results of \cite{ Mikhaylov_2017}, it can be seen that when compared to traditional LoRaWAN data transmission techniques, D2D communications can improve the performance in terms of the time and energy. It can be interpreted that applications requiring high coverage can leverage the planned LoRaWAN D2D communications, such as use cases in distributed smart grid deployments for management and trade. Moreover, in a more complex cooperative mechanism, the EDs may use the idea of network coding \cite{Rebelatto_2012,Xiao_2010,Wu_2015,Oliveira_2022} and broadcast linear combinations, carried out across a finite field ${\rm{GF}}(q)$, of several frames rather than merely serving as routers by relaying one frame at a time, aiming to increase reliability rather than throughput. In \cite{ Oliveira_2022}, a network-coded cooperation LoRa scheme is proposed. In case of EDs being able to communicate with one another using D2D technology, the outage probability and energy efficiency (EE) of a LoRa network are assessed. When modeling the outage likelihood of an ED, both connection and collision probabilities are taken into account. Also, an analysis of the EE using a realistic power consumption model is presented. The results show that the proposed approach can lead to a considerable improvement in terms of both outage probability and EE when compared to a conventional LoRa network. Given this, we propose a transmission scheme based on the integration of D2D communication with the LR-FHSS approach utilizing a network-coding scheme and obtain the outage performance results for a practical DtS-IoT scenario. 

The contributions of this paper can be summed up as follows:

\begin{itemize}
\item We present a detailed analytical procedure to obtain a closed-form expression for the outage probability of LR-FHSS scheme differently from \cite{10,12} by taking into account the effects of important parameters such as practical fading model (shadowed-Rician), path loss, and noise.
\item We present a D2D network coding-based LR-FHSS transmission scheme in which the EDs communicate with each other as well as the IoT gateway installed on a LEO satellite using LoRa and LR-FHSS, respectively. We also provide a closed-form expression of the outage probability of the D2D-aided LR-FHSS.
\item We indicate analytically and through simulations that the D2D-aided LR-FHSS can provide up to $249.9\%$ and $150.1\%$ in network capacity at a typical outage of $10^{-2}$ for DR6 and DR5, respectively, when compared to the LR-FHSS scheme which is obtained at the cost of additional transmissions by each IoT ED.  
\end{itemize}
 
The rest of the paper is organized as follows. In Section II, the system model is presented. The Outage performance of the LR-FHSS scheme is derived in Section III. Section IV includes the D2D-aided LR-FHSS scheme and its outage probability analysis. Numerical results are provided in Section V. Finally, in Section VI, the paper is concluded.

\begin{table*}[t!]
	\caption{Parameters for LR-FHSS in the United States region \cite{9}.}
	\label{tab_LR_FHSS}
	\centering
	\begin{tabularx}{\textwidth}{Xcc}
		\toprule[1.0pt]
		LoRaWAN data rate alias & DR5 & DR6 \\
		LR-FHSS number of channels & $8$ & $8$ \\
		LR-FHSS operating channel width (OCW), in kHz & $1523$ & $1523$ \\
		LR-FHSS occupied bandwidth (OBW), in Hz & $488$ & $488$ \\
		Minimum separation between LR-FHSS hopping carriers, in kHz	& $25.4$ & $25.4$ \\
		Number of physical carriers available for\ frequency hopping in each OCW channel	 & $3120$ ($52\times60$)	& $3120$ ($52\times60$) \\
		Number of physical carriers usable for frequency hopping per end-device transmission &	 $60$	& $60$ \\
		Coding rate	& $1/3$	& $2/3$ \\
		Physical bit rate (bits/s)	& $162$ &	$325$ \\
		Maximum MAC payload size (bytes)	& $125$ & 	$125$ \\
		Maximum MAC payload fragments	& $130$ &	$65$ \\
		Header replicas	& $3$	& $2$\\
		Header duration per replica (seconds)	& $0.233$ &	 $0.233$\\
		Time on air (seconds)	& $0.70+13.26$	& $0.47+6.63$ \\
		\bottomrule[1.0pt]
	\end{tabularx}
\end{table*}

\section{System model}
Without loss of generality, we take into account  the $902–928$ MHz LR-FHSS specification for the United States region as described in Table \ref{tab_LR_FHSS}. With center frequencies of $903+1.6n$ MHz ($n=0,1,\dots,7$) and $1.523$ MHz channel bandwidth, the frequency range of $902-928$ MHz comprises $8$ LR-FHSS channels, i.e., operating channel width (OCW). Additionally, each OCW has several sub-carriers with occupied bandwidths (OBW) of $488$ Hz that are used for frequency hopping purposes. Hence, the number of sub-carriers available for creating a hopping sequence inside a single OCW is $(1.523\times 10^6)/488=3120$. The minimum frequency spacing between any two consecutive hops, however, must be $25.4$ kHz. Essentially, this requirement results in a hopping grid with $G=52$ groups and $S=60$ sub-carriers for each group to be exploited in a hopping sequence. Specifically, when an ED wants to send its packet, first, a group number is chosen randomly from all of the $52$ groups available. Then, a hash function-based random hopping sequence is generated using the frequencies inside the selected group with a length equal to the sum of header replicas and payload plus cyclic redundancy check (CRC) \cite{9} fragments. Note that hereinafter, we use the term ``payload'' for ``payload plus CRC'' for ease of notation. Using the data from the header, the gateway at the receiving end recognizes and reassembles the packet payload \cite{9,10}. Fig. \ref{Freq_Time} shows the frequency-time occupation of a LR-FHSS modulated packet inside a single OCW for a $90$ byte payload in DR5 transmission mode. As can be seen, in the DR5 specification, we have $3$ header replicas and the payload part is fragmented into $90/6=15$ segments as specified by \cite{9}.

\begin{figure}[t!]
  \centering
  \includegraphics[width=\linewidth]{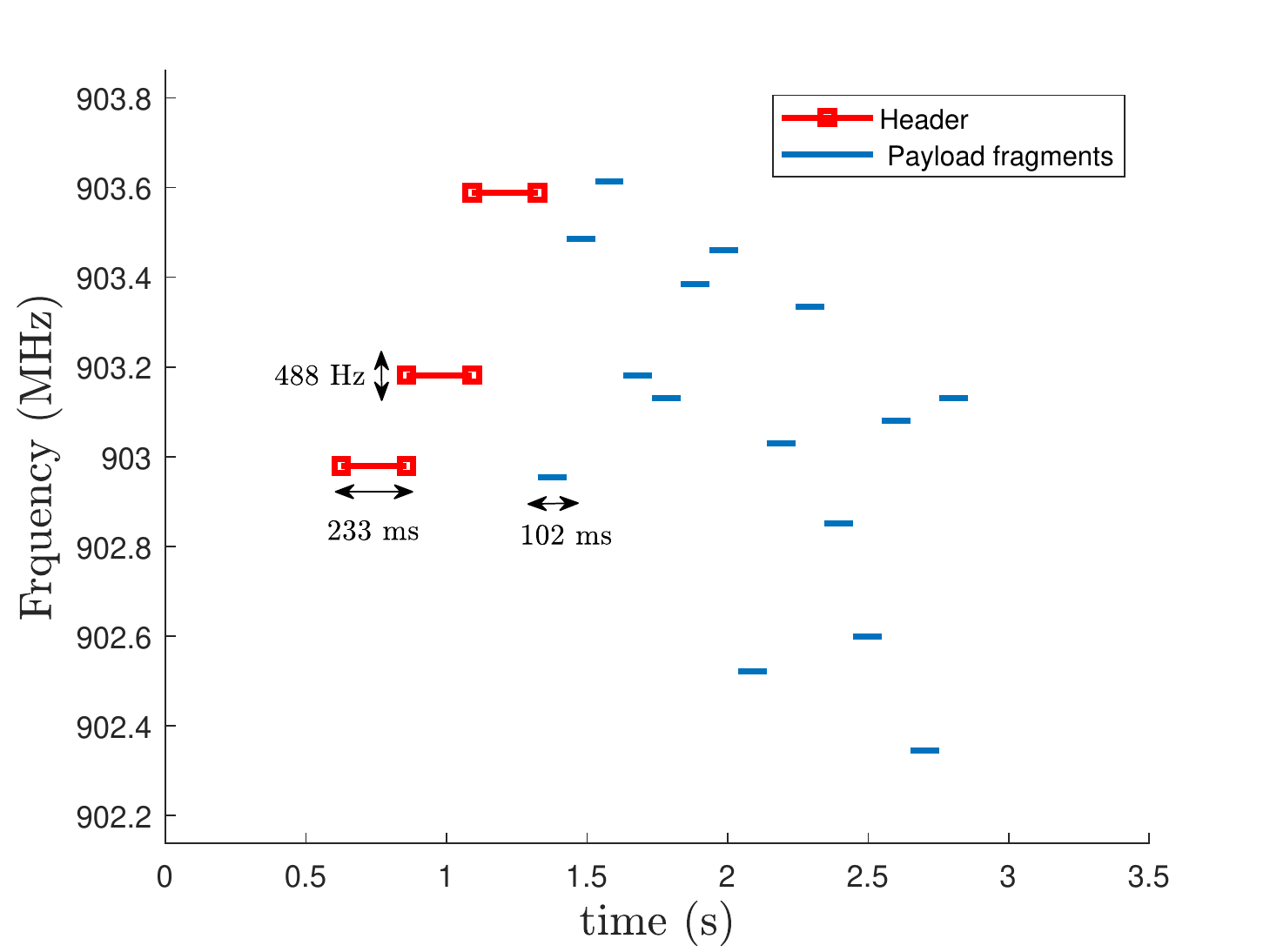}
  \caption{Frequency-time demonstration of a LR-FHSS packet.}
  \label{Freq_Time}
\end{figure}

The network diagram for the proposed D2D-aided LR-FHSS system is shown in Fig. \ref{LoRa_FHSS}. As can be seen, there are three main links in the network, i.e., D2D link, ED to satellite link, and satellite to network server (NS) link. In this paper, we focus only on D2D and ED to satellite link and the characteristics of the backhaul link is out of scope of this work. Assume that a LEO satellite is orbiting at the height of $H_{\rm s}$ with a footprint in form of a circular region with the radius of $\mathcal{R}_{\rm s}$. Moreover, the EDs are assumed to be Class A devices following the uplink LoRaWAN unslotted ALOHA-based grant-free channel access scheme \cite{SX1272} with a maximum duty cycle of $1\%$. Therefore, we can formulate the duty cycle for a given LR-FHSS data rate (DR) as:

\begin{equation}
\label{duty_cycle}
\Delta({\rm{DR}}) = N_{\rm t} \frac{{\rm{ToA}_p}}{T},
\end{equation}
where $N_{\rm t}$ and $T$ denote number of transmissions in a time-slot and duration of each time-slot, respectively. Moreover, the ${\rm{ToA_p}}$ represents the time-on air of the LR-FHSS packet and can be formulate as:

\begin{equation}
\label{ToA}
{\rm{ToA_p}}=N_{\rm{HDR}}T_{\rm{HDR}}+N_{\rm{PL}}T_{\rm{PL}},
\end{equation}
where $N_{\rm{HDR}}$, $T_{\rm{HDR}}$, $N_{\rm{PL}}$, and $T_{\rm{PL}}$ are the number of header replicas, the header duration, the number of payload fragments, and the duration of a payload fragment, respectively.

\begin{figure}[t!]
  \centering
  \includegraphics[width=\linewidth]{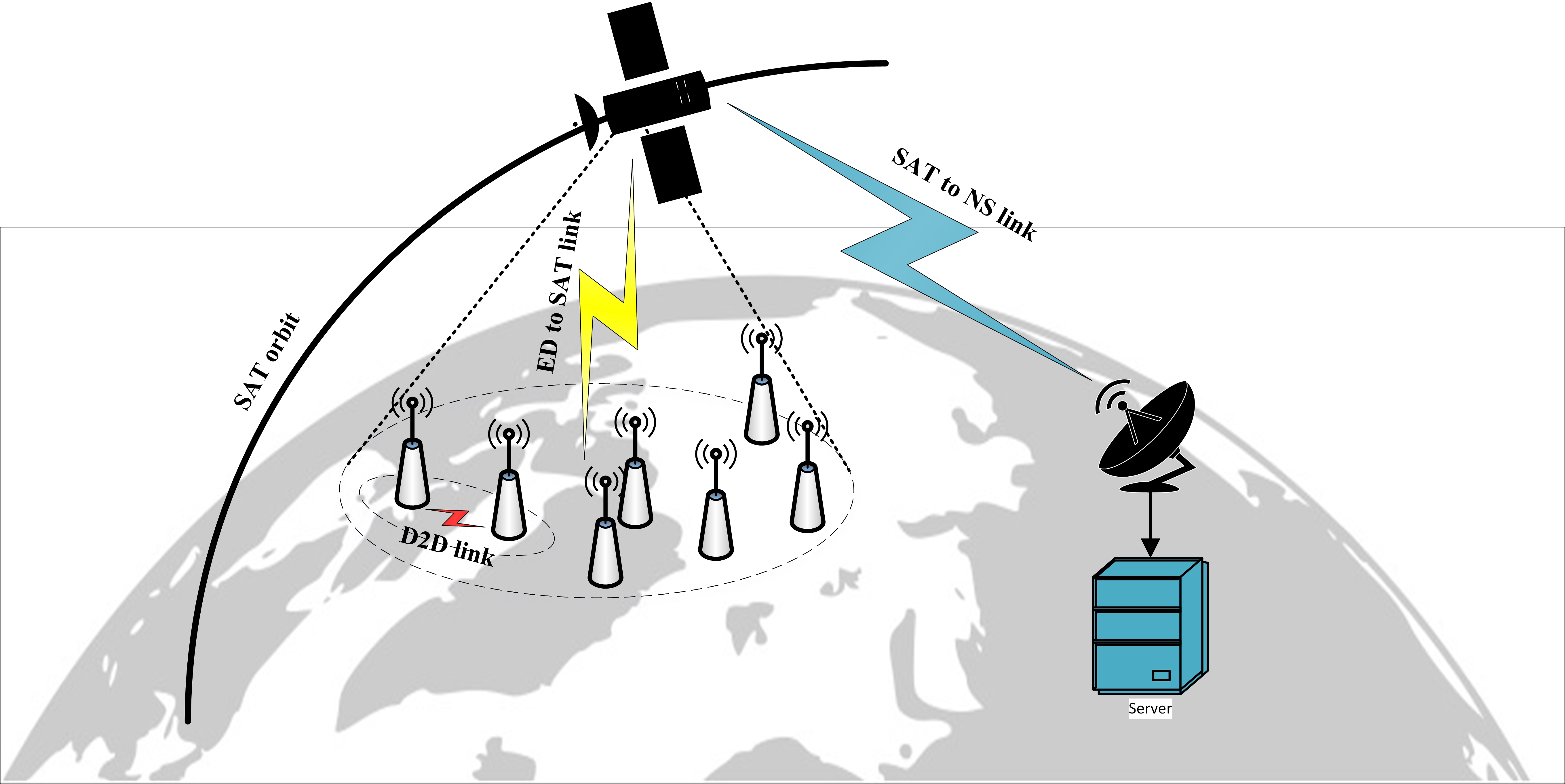}
  \caption{D2D-aided LR-FHSS network.}
  \label{LoRa_FHSS}
\end{figure}

To calculate the area covered by a LEO satellite with the speed of $\nu$, we present Fig. \ref{vert} to show a vertical view of satellite footprint movement on the surface of Earth in the duration of $T$. As is apparent, the covered area of $\mathcal{F}\in\mathbb{R}^2$ is equal to $|\mathcal{F}|=2\mathcal{R}_{\rm s}\nu T+\pi \mathcal{R}_{\rm s}^2$.

\begin{figure}[t!]
  \centering
  \includegraphics[scale=0.3]{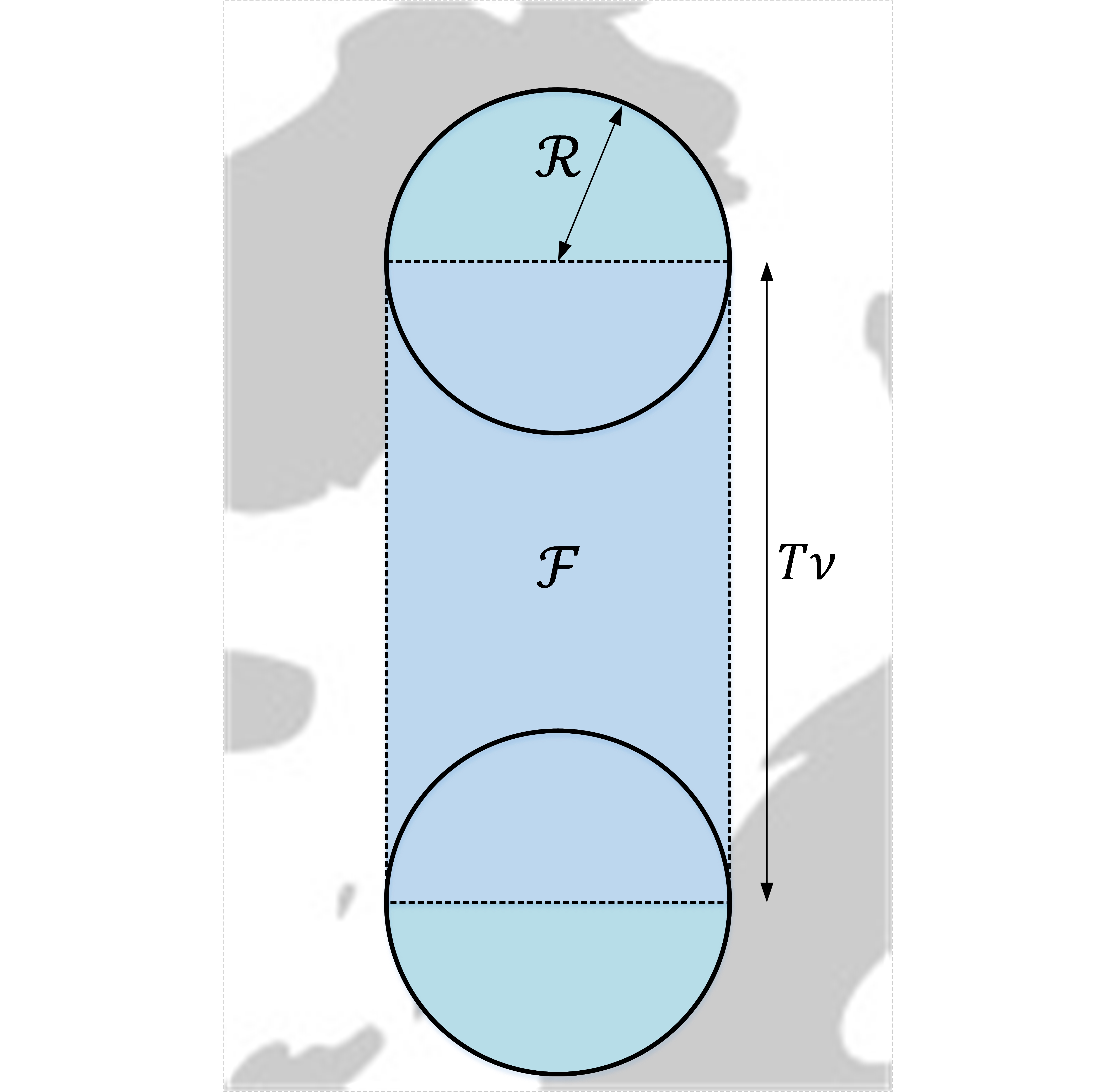}
  \caption{Vertical view of LEO satellite covered area in duration $T$.}
  \label{vert}
\end{figure}

Assume that we have $N_{\rm u}$ EDs uniformly distributed inside $\mathcal{F}$ according to a Poisson point process (PPP), and each ED generates a packet during the time-slot $T$ and transmits it at a random time instant. Therefore, the inter-arrival times for successive packets follow an exponential distribution with the mean of $t_{\rm{ave}}=T/(N_{\rm t} N_{\rm u})$. Considering the random nature of frequency hopping procedure, there might be co-channel interference between a desired ED, called ${\rm{ED}}_0$ hereinafter, and other devices transmitting inside the ${\rm{ToA_p}}$ of ${\rm{ED}}_0$. Consequently, the received signal at the IoT GW installed on the satellite in presence of $I_{\rm ci}$ co-channel interfering devices can be expressed as \cite{Santa_2020}:

\begin{equation}
\label{rec_sig}
r=\sqrt{P g(\alpha_0)} h_0 s_0+\sum_{i=1}^{I_{\rm ci}}{\sqrt{Pg(\alpha_i)} h_i s_i}+w,
\end{equation}
where $h_\ell$, $s_\ell$, and $g(\alpha_\ell)$, $\ell=0,1,\ldots, I_{\rm ci}$, represent the fading channel gain, modulated signal, and path loss at satellite elevation angle of $\alpha_\ell$ with respect to the $\ell$th ED, respectively. The term $w$ denotes AWGN with zero mean and variance $\sigma^2 = -174 + {\rm{NF}} + 10\log_{10}{B_{\rm{OBW}}}$ (dBm), where ${\rm NF}$ and $B_{\rm{OBW}}$ are the receiver's noise figure and occupied bandwidth \cite{17}. Moreover, the effective received power $P$ is given as $P=P_{\rm{t}} G_{\rm{t}} G_{\rm{r}}$, where $P_{\rm{t}}$, $G_{\rm{t}}$, and $G_{\rm{r}}$ represent the transmitted power, transmitter's antenna gain, and receiver's antenna gain, respectively. In the following, we present the exploited models for DtS channel fading and path loss.

\subsubsection{DtS channel fading}
The fading channel gain is characterized as a complex random variable $h_\ell=|h_\ell|\exp(j\angle{h_\ell})$. The elevation angle of the satellite and the Doppler shift are both taken into account by the various models designed for satellite communication channels \cite{13,14,15}. As stated in \cite{Saeed_2020}, the DtS channel models in the literature can be mainly categorized into static and dynamic channel models. In static channel models, the signal envelope distribution can be represented by a single constant-time distribution. On the opposite, the dynamic models are based on Markov chains, with several DtS channel states that each corresponds to various propagation settings. To be able to obtain a closed-form expression for the outage probability of the D2D-aided LR-FHSS system, we select the static channel model presented in \cite{13} which can be interpreted as a Rice fading channel with fluctuating (e.g., random) line-of-sight (LoS) component. Therefore, the shadowed-Rice probability density function (PDF) for the fading channel gain envelope $|h_\ell|$ can be expressed as \cite{13}:

\begin{IEEEeqnarray}{rCl}
\label{fading_model}
\mathfrak{f}_H(|h_\ell|) & =& \left(\frac{2{b_0}m}{2{b_0}m+\Omega}\right)^m \frac{|h_\ell|}{b_0} \exp\left(-\frac{|h_\ell|^2}{2b_0}\right) \nonumber \\
&&\times {{}_{1}F_{1}}\left[ m,1,\frac{\Omega |h_\ell|^2}{2{b_0}(2{b_0}m+\Omega)}\right],
\label{eq:dont_use_multline}
\end{IEEEeqnarray}
where $2b_0$ is the average power of the scattered component, $m$ is the Nakagami parameter, and $\Omega$ is the average power of the LoS component. Moreover, ${{}_{1}F_{1}}(.,.,.)$ is the confluent hypergeometric function \cite{Int_2014}. It is worth noting that unlike the Nakagami model, $m$ here changes over the range of $m\geq 0$. In fact, for $m=0$, (\ref{fading_model}) reduces to Rayleigh and for $m=\infty$, it becomes Rician. Based on the measurements and results provided in \cite{13}, we can model three different shadowing environments as presented in Table \ref{ShEnv}.  

\begin{table}[t!]
\caption{Shadowed-Rice parameters for different environments \cite{13}.}
\label{ShEnv}
\centering
\begin{tabularx}{0.49\textwidth}{Xccc}
\toprule[1.0pt]
Environmental conditions & $b_0$ & $m$ & $\Omega$ \\
\midrule[1.0pt]
Infrequent light shadowing & $0.158$ & $19.4$ & $1.29$ \\
Frequent heavy shadowing & $0.063$ & $0.739$ & $8.97\times 10^{-4}$ \\
Average shadowing & $0.126$ & $10.1$ & $0.835$ \\
\bottomrule[1.0pt]
\end{tabularx}
\end{table}

\subsubsection{Path-loss model}
We apply the model in \cite{16} to calculate the path loss, which concentrates on the propagation characteristics of satellite communication systems operating at $900-2100$ MHz in non-geostationary orbits. The selected path-loss model is specifically for rural shadowed locations with $99\%$ probability and is given as \cite{16}:

\begin{IEEEeqnarray}{rCl}
\label{path_loss}
g(\alpha)&=&32.44+20\log d(\alpha) +20\log f+L_{{\rm{air}}}(\alpha) \nonumber\\
&&+L_{{\rm{rain}}}(\alpha)+L_{{\rm{tree}}}(\alpha)+L_{{\rm{fog}}}+L_{{\rm{iono}}}+L_{{\rm{frad}}},
\label{eq:dont_use_multline}
\end{IEEEeqnarray}
where $d(\alpha)$, $f$, $L_{\rm{air}}$, $L_{\rm{rain}}$, $L_{\rm{tree}}$, $L_{\rm{fog}}$, $L_{\rm{iono}}$, and $L_{\rm{frad}}$ denote distance to the satellite (in km), operation frequency (in MHz), absorption of the atmosphere, attenuation caused by rain (absorption and scattering), tree, cloud or fog, ionosphere, and polarization effects, respectively. The distance $d(\alpha)$ is calculated as:

\begin{equation}
\label{Distance}
d(\alpha)=R_{\rm{e}}\left[\sqrt{\left(\frac{H_{\rm{s}}+R_{\rm{e}}}{R_{\rm{e}}}\right)^2-\cos^2(\alpha)}-\sin(\alpha)\right],
\end{equation}
where $R_{\rm{e}}=6378$ km is the Earth radius. The other path-loss parameters can be set as follows (all in dB) \cite{16}: $L_{\rm{air}}(\alpha)=0.1(1+\cos{(\alpha)})$, $L_{\rm{rain}}(\alpha)<0.1$, $L_{{\rm{fog}}}=0$, $L_{{\rm{iono}}}+L_{{\rm{frad}}}=3$, and

\begin{IEEEeqnarray}{rCl}
\label{tree}
L_{{\rm{tree}}}\left(\alpha\right)&=&\Bigg{[}25.8\exp{\left(-1.1\alpha\frac{1.57}{90}\right)}\nonumber\\
&&+1.5\cos{\left(\alpha\frac{3.937}{90}\right)}\Bigg{]}\sqrt{\frac{f}{900}}.
\label{eq:dont_use_multline}
\end{IEEEeqnarray}

\subsubsection{Average number of interfering devices}
First, assume that ${\rm{ED}}_0$ starts its packet transmission at time instant $t_0$ and selects the group $G_0$ out of $G$ available frequency groups to generates its frequency hopping sequence with the length of $N_{\rm{HDR}}+N_{\rm{PL}}$. The average number of EDs that start their packet transmission in the range of $[t_0-{\rm{ToA_p}},t_0+{\rm{ToA_p}}]$ can be formulated as: 

\begin{equation}
\label{Inter}
I=\left\lceil \frac{2{\rm ToA}_{\rm P}}{t_{\rm ave}} \right\rceil-1.
\end{equation}

Note that these are the EDs that can potentially cause co-channel interference to ${\rm{ED}}_0$ if they select the same frequency group as ${\rm{ED}}_0$ for hopping sequence generation (see Fig. \ref{Int}, (a)). Because the ToAs for header and payload fragments differ, we must define the average numbers of interfering devices for header and payload transmissions separately. To this end, consider that $I'$ ($I'=0,1,\dots,I$) out of these $I$ time-domain interfering devices choose $G_0$. By defining $I'=0$ as ``no interference" case and excluding it from our assumption, for now, we define the average inter-arrival times of header and payload fragments in the duration of $2{\rm{ToA_p}}$ as follows:

\begin{equation}
\label{t_HDR}
t_{\rm HDR}(I')=\frac{2{\rm ToA}_{\rm P}}{I'\times N_{\rm HDR}},
\end{equation}

\begin{equation}
\label{t_PL}
t_{\rm PL}(I')=\frac{2{\rm ToA}_{\rm P}}{I'\times N_{\rm PL}}.
\end{equation}

During a header transmission, the ${\rm{ED}}_0$ can experience interference from other EDs' header (left of Fig. \ref{Int}, (b)) or payload parts (right of Fig. \ref{Int}, (b)). Hence, the average number of header interference can be formulated as:

\begin{equation}
\label{I_HDR}
I_{\rm HDR}(I')=\frac{2T_{\rm HDR}}{t_{\rm HDR}(I')}+\frac{T_{\rm HDR}+T_{\rm PL}}{t_{\rm PL}(I')}.
\end{equation}
Similarly, during a payload transmission, interference can occur due to other EDs' payload (left of Fig. \ref{Int}, (c)) or header parts (right of Fig. \ref{Int}, (c)). Therefore, the average number of interfering devices from payload parts is
\begin{equation}
\label{I_PL}
I_{\rm PL}(I')=\frac{2T_{\rm PL}}{t_{\rm PL}(I')}+\frac{T_{\rm PL}+T_{\rm HDR}}{t_{\rm HDR}(I')}.
\end{equation}

\begin{figure}[t!]
  \centering
  \includegraphics[width=\linewidth]{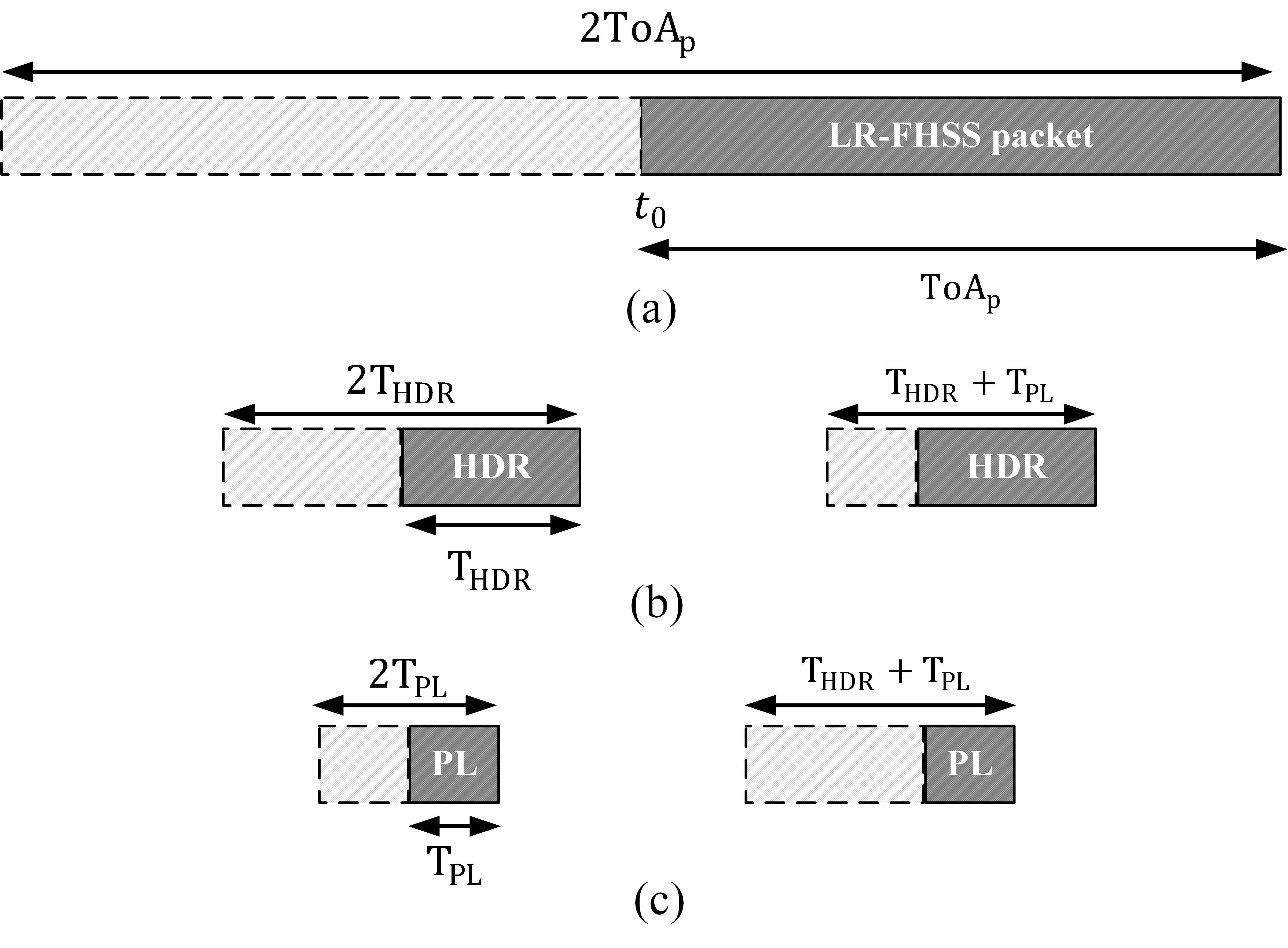}
  \caption{Time range of interference occurrence for LR-FHSS: (a) packet transmission, (b) header transmission, and (c) payload transmission.}
  \label{Int}
\end{figure}

\section{LR-FHSS Outage Probability Analysis}
In this section, a closed-form expression for the outage probability of the LR-FHSS scheme in the shadowed-Rice fading environment is derived. 

Considering the LR-FHSS transmission framework, first we define three events, i.e., outage, disconnection, and collision as follows:

\begin{definition}
\label{def1}
\textbf{Outage.} The outage event in LR-FHSS scheme is defined as the case in which at least one of these two conditions occur \cite{10}: (a) all $N_{\rm{HDR}}$ header replicas are lost or (b) a minimum of one header is decoded successfully, but more than $(1-\kappa)N_{\rm{PL}}$ payload fragments are lost, where $\kappa$ is the forward error correction (FEC) code rate.
\end{definition} 

\begin{definition}
\textbf{Disconnection.} The disconnection event is defined as the case in which the packet's signal-to-noise ratio (SNR) is lower than a certain threshold $\psi$.
\end{definition}

\begin{definition}
\label{def2}
\textbf{Collision.} The collision event is defined as the case in which (a) a packet fragment of the desired device collides with one or more other devices' packet fragments in time and frequency and (b) the power differential between the desired packet element and the sum of interfering fragments is smaller than a set threshold $\delta$.
\end{definition}

Mathematically, the probability of disconnection for one packet element, regardless of belonging to header or payload, can be expressed as:

\begin{equation}
\label{P_disc}
P_{\rm disc}=\Pr{\left\{\frac{P\left|h_0\right|^2g\left(\alpha_0\right)}{\sigma^2}\le\psi \Bigg{|}g(\alpha_0)\right\}}.
\end{equation}

\begin{lemma}
In the shadowed-Rice fading environment, the probability of disconnection can be written as:

\begin{equation}
\label{P_disc_close}
P_{\rm{disc}}=A \sum\limits_{n=0}^{\infty} \frac{(m)_n}{n! n!} C(n) (\frac{1}{B})^{n+1} \gamma\left(n+1,{B\frac{\psi\sigma^2}{Pg_0}}\right).
\end{equation}
where $(m)_n$ is the Pochhammer symbol defined as $(m)_n=m(m+1)\dots (m+n-1)$ and $\gamma(a,x)$ denotes the unnormalized incomplete Gamma function as:

\begin{equation}
\label{gamma}
\gamma(a,x)=\int_{0}^{x} e^{-t} t^{a-1} \mathrm{d}t.
\end{equation}
Moreover, we have:

\begin{equation}
\label{5}
A=\left(\frac{2{b_0}m}{2{b_0}m+\Omega}\right)^m \frac{1}{2b_0},
\end{equation}

\begin{equation}
\label{55}
B=\frac{1}{2b_0}, 
\end{equation}

\begin{equation}
\label{6}
C(n)=\left[\frac{\Omega}{2{b_0}(2{b_0}m+\Omega)}\right]^n.
\end{equation}

\end{lemma}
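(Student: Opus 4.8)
The plan is to recognize $P_{\rm disc}$ as the cumulative distribution function of the fading envelope evaluated at the disconnection threshold, and then to integrate the shadowed-Rice PDF of (\ref{fading_model}) term by term after expanding the confluent hypergeometric factor into its power series. First I would rewrite the disconnection condition $P|h_0|^2 g(\alpha_0)/\sigma^2 \le \psi$ in (\ref{P_disc}) as $|h_0|^2 \le \psi\sigma^2/(Pg_0)$, so that, conditioned on the (deterministic) path loss $g_0 = g(\alpha_0)$,
\[
P_{\rm disc} = \int_0^{\sqrt{\psi\sigma^2/(Pg_0)}} \mathfrak{f}_H(h)\,\mathrm{d}h .
\]

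Next I would expand the confluent hypergeometric function using its Maclaurin series ${}_1F_1[m,1,z] = \sum_{n=0}^{\infty} \frac{(m)_n}{n!\,n!}\,z^n$, where the identity $(1)_n = n!$ accounts for the squared factorial in the denominator. Substituting $z = \Omega h^2/[2b_0(2b_0 m+\Omega)]$ turns each power into $z^n = C(n)\,h^{2n}$, with $C(n)$ as defined in (\ref{6}). Pulling the constant prefactor $A$ of (\ref{5}) out front and interchanging the summation with the integral — justified by the absolute and uniform convergence of the ${}_1F_1$ series on the bounded integration interval — reduces the problem to evaluating, for each $n$, the elementary integral $\int_0^{\sqrt{x_0}} h^{2n+1}\exp(-h^2/2b_0)\,\mathrm{d}h$, where $x_0 = \psi\sigma^2/(Pg_0)$.

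The key computational step is then the change of variable $t = h^2/(2b_0) = B h^2$, with $B = 1/(2b_0)$ from (\ref{55}). Under this substitution $h\,\mathrm{d}h = (1/2B)\,\mathrm{d}t$ and $h^{2n} = (t/B)^n$, so each integral collapses to
\[
\left(\tfrac{1}{B}\right)^{n+1}\!\!\int_0^{B x_0} t^{n} e^{-t}\,\mathrm{d}t = \left(\tfrac{1}{B}\right)^{n+1} \gamma\!\left(n+1,\, B\tfrac{\psi\sigma^2}{Pg_0}\right),
\]
which is exactly the lower incomplete Gamma function of (\ref{gamma}) with the required upper limit. Reassembling the prefactor $A$, the series weights $(m)_n/(n!\,n!)$, the coefficients $C(n)$, and the factors $(1/B)^{n+1}$ then reproduces (\ref{P_disc_close}).

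I expect the main obstacle to be bookkeeping rather than conceptual: one must verify that the $1/b_0$ envelope factor, together with the $(2b_0)^n$ produced by the substitution, combines cleanly into the single weight $A(1/B)^{n+1}$, and one should state explicitly the convergence argument that licenses the term-by-term integration. Everything else follows from standard special-function identities.
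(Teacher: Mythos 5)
Your route is essentially the paper's own proof in Appendix A. The paper starts one step further along: it quotes from the shadowed-Rice reference the PDF of the \emph{power} $r=|h_0|^2$, namely $\mathfrak{f}_R(r)=A\exp(-Br)\,{{}_{1}F_{1}}\left[m,1,C(1)r\right]$, expands the ${{}_{1}F_{1}}$ as $\sum_n \frac{(m)_n}{n!\,n!}C(n)r^n$, interchanges sum and integral, and substitutes $t=Br$ to produce the incomplete gamma functions. You perform the identical computation starting from the envelope PDF (\ref{fading_model}) and substituting $t=Bh^2$, which is the same thing after the change of variables $r=h^2$; so the approach and the final result are both correct.

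One bookkeeping correction, since it is precisely the point you deferred in your closing paragraph: your two intermediate displays each carry a compensating factor-of-two error. The constant that can be pulled out of (\ref{fading_model}) after stripping the factor $|h|$ is $\left(\frac{2b_0 m}{2b_0 m+\Omega}\right)^m \frac{1}{b_0}=2A$, not $A$; and since $h\,\mathrm{d}h=\mathrm{d}t/(2B)$ under $t=Bh^2$, the integral $\int_0^{\sqrt{x_0}} h^{2n+1}e^{-Bh^2}\,\mathrm{d}h$ equals $\frac{1}{2}\left(\frac{1}{B}\right)^{n+1}\gamma\left(n+1,Bx_0\right)$, i.e., half of what you wrote. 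The two slips cancel, $2A\times\frac{1}{2}=A$, so (\ref{P_disc_close}) is recovered exactly, but as written each identity is individually wrong and should be repaired. The convergence argument you invoke for the interchange of sum and integral is fine (the ${{}_{1}F_{1}}$ series converges uniformly on the bounded integration interval); the paper performs the same interchange without comment.
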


\begin{proof}
Please refer to Appendix A.
\end{proof}

Note that a packet fragment can be lost either due to disconnection or collision or both. On the other hand, we have the average number of $\lceil I_{\rm HDR}(I')\rceil$ time-domain interfering deices in the duration of one header fragment. By using a binomial distribution, we sum over all the cases that $k$ of this $\lceil I_{\rm HDR}(I')\rceil$ EDs select the same frequency as the desired header fragment out of $S$ available frequencies with a probability of $1/S$. Therefore, the probability of condition (a) in Definition \ref{def1} due to $I'$ EDs selecting same group as ${\rm{ED}}_0$ can be computed as:

\begin{IEEEeqnarray}{rCl}
\label{P_HDR}
P_{\rm HDR}(I')&=&\Bigg{[}P_{\rm disc}+\left(1-P_{\rm disc}\right)\sum_{k=1}^{\left\lceil I_{\rm HDR}(I')\right\rceil}\left(\begin{matrix}\left\lceil I_{\rm HDR}(I')\right\rceil\\k\\\end{matrix}\right)\nonumber \\
&&\times \left(\frac{1}{S}\right)^k\left(\frac{S-1}{S}\right)^{\left\lceil I_{\rm HDR}(I')\right\rceil-k} P_{\rm cap}(k)\Bigg{]}^{N_{\rm HDR}}.\nonumber \\
\label{eq:dont_use_multline}
\end{IEEEeqnarray}
In the above expression, $P_{\rm cap}(k)$, referred to as ``capture failure probability", represents the probability of condition (b) in Definition \ref{def2}. By incorporating the capture effect, i.e., receiving the strongest packet fragment successfully and considering other packet fragments as interference, the capture failure probability probability is expressed as:

\begin{equation}
\label{P_cap}
P_{\rm cap}(k)=\Pr{\left\{\frac{\left|h_0\right|^2g_0}
{\sum_{m=1}^{k}\left|h_m\right|^2g_m}\le\delta\Bigg{|}g_0,g_1,\dots,g_k\right\}}.
\end{equation}
Note that hereinafter, we use $g_m$ instead of $g(\alpha_m)$ for the sake of simplifying the notation.

\begin{lemma}
The capture failure probability can be expressed as follows:

\begin{IEEEeqnarray}{rCl}
\label{P_cap_close}
P_{\rm cap}(k)&=& 1 - AD\sum\limits_{i=0}^{\infty}c_i  \sum\limits_{n=0}^{\infty}\frac{(m)_n}{n!n!}C(n) \times (\frac{1}{B})^{n+1} \Gamma (n+1)\nonumber \\
&& + AD\sum\limits_{i=0}^{\infty}c_i \sum\limits_{u=0}^{k+i-1}\frac{1}{u!} (\frac{g_0}{\alpha\delta})^u\times \Gamma(u+1)\times {B'} ^{-(u+1)}\nonumber \\
&&\times {{}_{2}F_{1}} (m,u+1,1,\frac{C(1)}{B'}),
\label{eq:dont_use_multline}
\end{IEEEeqnarray}
where $\alpha$, $\Gamma(.)$, and ${{}_{2}F_{1}}(.,.,.,.)$ represent an arbitrary positive number in the range of $[0,4\min\limits_{m}\{b_0 g_m\}]$, the Gamma function \cite{Int_2014}, and the Gauss hypergeometric function \cite{Int_2014}, respectively. Also, we have:

\begin{equation}
\label{555}
B'=B+\frac{g_0}{\alpha \delta}, 
\end{equation}
and
\begin{equation}
\label{66}
D=\alpha^k\prod\limits_{i=1}^{k}\frac{(2b_0 g_i)^{m-1}}{\left(2b_0 g_i+\frac{\Omega g_i}{m}\right)^m}. 
\end{equation}

\end{lemma}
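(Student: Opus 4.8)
The plan is to read $P_{\rm cap}(k)$ as the probability that the desired power $X=|h_0|^2 g_0$ lies below $\delta$ times the aggregate interference power $Z=\sum_{j=1}^{k}|h_j|^2 g_j$, all conditioned on the deterministic path-loss gains. Since the channel gains are independent, I would first condition on the desired gain $|h_0|^2=g$ and write $P_{\rm cap}(k)=\int_0^\infty\Pr\{Z\ge g g_0/\delta\}\,f_G(g)\,\mathrm{d}g=1-\int_0^\infty F_Z\!\left(g g_0/\delta\right)f_G(g)\,\mathrm{d}g$, where $f_G$ is the PDF of the shadowed-Rice power $|h_0|^2$ and $F_Z$ is the CDF of $Z$. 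Expanding the ${}_1F_1$ in (\ref{fading_model}) as a power series and mapping $|h_0|\mapsto|h_0|^2$ yields the power-domain form $f_G(g)=A\,e^{-Bg}\,{}_1F_1(m,1,C(1)g)$ with $A$, $B$, $C(1)$ as in (\ref{5})--(\ref{6}); this is exactly the density whose partial integral already produced the disconnection result of Lemma 1.

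The hard part is the law of the interference sum $Z$, which I would reach through its moment generating function (MGF). Using $f_G$ and the tabulated transform $\int_0^\infty e^{-p t}{}_1F_1(m,1,ct)\,\mathrm{d}t=p^{-1}(1-c/p)^{-m}$, the MGF of a single scaled term factors cleanly as $M_{|h_j|^2 g_j}(s)=(1-s/B_j)^{m-1}(1-s/\beta_j)^{-m}$, with $B_j=1/(2b_0 g_j)$ and $\beta_j=m/[g_j(2b_0 m+\Omega)]$ — a product of two Gamma-type factors whose net shape is $m-(m-1)=1$. Multiplying over the $k$ independent interferers gives $M_Z(s)=\prod_{j=1}^{k}(1-s/B_j)^{m-1}(1-s/\beta_j)^{-m}$, a product of Gamma-type factors of total shape $k$, which I would invert by a Moschopoulos-style expansion: factoring out a reference scale $\alpha$ and expanding the residual product as a power series gives $f_Z(z)=D\sum_{i=0}^{\infty}c_i\,z^{k+i-1}e^{-z/\alpha}/[\alpha^{k+i}\Gamma(k+i)]$ and hence $F_Z(t)=D\sum_{i=0}^{\infty}c_i\,\gamma(k+i,t/\alpha)/\Gamma(k+i)$. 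The prefactor is the Moschopoulos constant $\alpha^{k}\prod_j(\alpha\beta_j)^{m}(\alpha B_j)^{-(m-1)}$, which collapses to (\ref{66}) via the identity $(2b_0 g_j)^{m-1}/(2b_0 g_j+\Omega g_j/m)^m=\beta_j^{m}B_j^{1-m}$; the $c_i$ are the recursively defined expansion coefficients; and requiring $|1-\alpha/\theta|<1$ for every scale $\theta\in\{1/B_j,1/\beta_j\}$ (the smallest of which is $2b_0\min_j g_j$) forces $\alpha\in[0,4\min_m\{b_0 g_m\}]$, precisely the stated range.

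With $F_Z$ in hand I would substitute it into $P_{\rm cap}(k)=1-\int_0^\infty F_Z(g g_0/\delta)\,f_G(g)\,\mathrm{d}g$, split $\gamma(k+i,\cdot)=\Gamma(k+i)-\Gamma(k+i,\cdot)$, and expand the integer-order upper incomplete Gamma as the finite sum $\Gamma(k+i,x)=\Gamma(k+i)\,e^{-x}\sum_{u=0}^{k+i-1}x^u/u!$. The $\Gamma(k+i)$ piece integrates against $f_G$ to the constant $D\sum_i c_i$, while each $u$-term leaves the integral $\int_0^\infty g^u e^{-B'g}\,{}_1F_1(m,1,C(1)g)\,\mathrm{d}g$ with $B'=B+g_0/(\alpha\delta)$ as in (\ref{555}); the same ${}_1F_1$ Laplace transform (now with a polynomial weight) evaluates it to $\Gamma(u+1)\,B'^{-(u+1)}\,{}_2F_1(m,u+1,1,C(1)/B')$, giving the Gauss hypergeometric third term of (\ref{P_cap_close}). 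Finally I would rewrite the leftover constant $D\sum_i c_i$ through the unit-mass identity $A\sum_{n}\frac{(m)_n}{n!\,n!}C(n)\,\Gamma(n+1)/B^{n+1}=1$ for $f_G$, which recasts it as the double series forming the second term, so the three pieces assemble into (\ref{P_cap_close}) exactly.

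The principal obstacle is this middle step — identifying the distribution of the interference sum $Z$. Everything before and after is bookkeeping with incomplete-Gamma identities and one tabulated confluent-hypergeometric integral, but $Z$ is a sum of independent, non-identically scaled shadowed-Rice powers for which no elementary $f_Z$ exists. The enabling observation is that each per-interferer MGF splits into Gamma-type factors of net integer shape, so a Moschopoulos series applies; the free scale $\alpha$ and its convergence window then absorb all the heterogeneity of the $g_j$ into the coefficients $c_i$ and the prefactor $D$, which is exactly why both appear in the final expression.
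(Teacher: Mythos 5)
Your proposal is correct and takes essentially the same route as the paper's Appendix~B: the same reduction $P_{\rm cap}(k)=1-\mathbb{E}_{|h_0|^2}\left[F_{X_k}\left(|h_0|^2 g_0/\delta\right)\right]$, the same MGF-product characterization of the interference sum, the same Moschopoulos/Ropokis-style series inversion with the free parameter $\alpha$ (same admissible range), prefactor $D$, and recursively computed coefficients $c_i$, the same integer-order incomplete-Gamma expansion, and the same tabulated ${}_{1}F_{1}$ Laplace transform producing the ${}_{2}F_{1}$ term. The only differences are cosmetic: your intermediate expression for the Moschopoulos constant carries a stray factor $\alpha^{k}$ (it should read $\prod_{j}(\alpha\beta_j)^{m}(\alpha B_j)^{-(m-1)}$, which already equals $D$ in (\ref{66})), and you obtain the second term of (\ref{P_cap_close}) from the unit-mass identity of the shadowed-Rice power PDF, whereas the paper evaluates $\int_{0}^{\infty}e^{-Bz}\,{}_{1}F_{1}\left(m,1,C(1)z\right)\mathrm{d}z$ term by term --- both yield the identical double series.
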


\begin{proof}
Please refer to Appendix B.
\end{proof}

The next step toward the calculation of LR-FHSS outage probability is to obtain the probability of condition (b) in Definition \ref{def1} using a similar approach to $P_{\rm HDR}(I')$. To this end, first we focus on one payload fragment. The probability of a single payload fragment (SPF) being lost due to due to $i_0$ EDs selecting same group as ${\rm{ED}}_0$ can be written as

\begin{IEEEeqnarray}{rCl}
\label{P_SFP}
P_{\rm SPF}(I')&=&P_{\rm disc}+\left(1-P_{\rm disc}\right)\sum_{k=1}^{\left\lceil I_{\rm PL}(I')\right\rceil}{\left(\begin{matrix}\left\lceil I_{\rm PL}(I')\right\rceil\\k\\\end{matrix}\right)\left(\frac{1}{S}\right)^k}\nonumber \\
&&\times \left(\frac{S-1}{S}\right)^{\left\lceil I_{\rm PL}(I')\right\rceil-k}P_{\rm cap}(k).
\label{eq:dont_use_multline}
\end{IEEEeqnarray}
Consequently, the probability of condition (b) in Definition \ref{def1} can be formulated as:

\begin{IEEEeqnarray}{rCl}
\label{P_PL}
P_{\rm PL}(I')&=&\sum_{m=\omega}^{N_{\rm PL}}\left(\begin{matrix}N_{\rm PL}\\m\\\end{matrix}\right)P_{\rm SPF}(I')^m \times \left[1-P_{\rm SPF}(I')\right]^{N_{\rm PL}-m},\nonumber\\
\label{eq:dont_use_multline}
\end{IEEEeqnarray}
where $\omega = \left\lceil\left(1-\kappa\right)N_{\rm PL}\right\rceil$. 

Finally, by combining (\ref{P_HDR}) and (\ref{P_PL}), the outage probability of the LR-FHSS system is obtained as:

\begin{IEEEeqnarray}{rCl}
\label{P_OUT}
O_{\rm L}&=&\sum_{I'=1}^{I}\left(\begin{matrix}I\\I'\\\end{matrix}\right)\left(\frac{1}{G}\right)^{I'}\left(\frac{G-1}{G}\right)^{I-I'}\nonumber \\
&&\times \left[P_{\rm H}(I')+\left(1-P_{\rm H}(I')\right)P_{\rm PL}(I')\right]+P_{\rm{NI}},
\label{eq:dont_use_multline}
\end{IEEEeqnarray}
where $P_{\rm{NI}}$ denotes the probability of packet loss in ``no interference" case. To calculate this term representing $I'=0$ situation, the only reason that might result in a packet fragment loss is noise. Therefore, we have:

\begin{IEEEeqnarray}{rCl}
\label{P_NI}
P_{\rm NI}&=& \left(\frac{G-1}{G}\right)^I \Bigg{[}P_{\rm disc}^{N_{\rm HDR}} + (1-P_{\rm disc}^{N_{\rm HDR}}) \nonumber \\
&&\times \sum_{m=\omega}^{N_{\rm PL}}\left(\begin{matrix}N_{\rm PL}\\m\\\end{matrix}\right)P_{\rm disc}^m \times \left[1-P_{\rm disc}\right]^{N_{\rm PL}-m}\Bigg{]}.
\label{eq:dont_use_multline}
\end{IEEEeqnarray}

\section{D2D-aided LR-FHSS Scheme}
In this section, we present the D2D-aided LR-FHSS scheme along with its network management aspects and outage probability derivation. 

Our proposed scheme contains three main sessions, i.e. D2D, original data transmission, and parity data transmission \cite{Oliveira_2022}:

\begin{enumerate}
\item D2D session: In this session, two cooperating EDs exchange their data using LoRa modulation with a limited coverage.
\item Original data transmission: After successfully completing the data transfer using D2D link, in this session, both EDs transmit their packets using LR-FHSS scheme to the IoT GW on the LEO satellite.
\item Parity data transmission: Finally, both EDs transmit their parity signals obtained as linear combinations of original packets via LR-FHSS. 
\end{enumerate}

Before providing the analysis for the outage performance, we present the network management scheme of the D2D-aided LR-FHSS framework based on the practical features and specifications available in LoRaWAN. 

\subsection{Network Management of D2D-aided LR-FHSS}
The EDs are assumed to be Semtech SX1261/2 \cite{SX1261} and SX1268 \cite{SX1268} modems (all class A devices) which are fully LR-FHSS compatible and allowed to transmit data at any time followed by two receiving windows (RW) for downlink communications including NS commands. Furthermore, the D2D session should be managed in a network-assisted manner as shown in \cite{Mikhaylov_2017} meaning that the D2D link establishment should be performed under the supervision of NS. This can lead to low overheads and lower possibility of negative effects on the network operation. Therefore, for establishing the D2D link, the NS can initiate the D2D session by a downlink command between two EDs at time $t_{\rm init}$ based on the satellite trajectory.

On the other hand, the NS has means to enable localization of each ED \cite{Mikhaylov_2017}. So, it is fair to assume that the NS has knowledge of each ED's location. We consider that the data exchange between cooperating EDs is established using conventional LoRa modulation. Knowing that SF12 provide the longest range in LoRa-based communications, we define a distance threshold, $d_{\rm max}$, within which two EDs can communicate with each other via LoRa SF12. In fact, $d_{\rm max}$ can be translated into the maximum of the required distance between an ED and other EDs for being able to establish the D2D link. Note that the distance between two cooperating EDs can be much lower than $d_{\rm max}$ and NS will assign lower SFs for their D2D session using the adaptive data rate (ADR) feature of LoRaWAN. Based on these features, NS can perform clustering on $N_{\rm u}$ EDs and assign a cluster to each pair of two EDs with a distance lower than $d_{\rm max}$. Assume that we obtain $N_{\rm c}$ clusters using this approach. Therefore, the remaining $N_{\rm s}=N_{\rm u}-2N_{\rm c}$ EDs are considered as single EDs and does not receive a D2D establishment command. Similar to the approach in \cite{Hoeller_2018,Santa_2020}, these $N_{\rm s}$ EDs use retransmission (RT) scheme with two transmission per timeslot for exploiting diversity.

For our proposed network, we present a D2D-aided LR-FHSS protocol similar to the LoRaWAN-D2D protocol presented in \cite{Mikhaylov_2017}. Consider two EDs, ${\rm{ED}}_{{\rm c}_{i,1}}$ and ${\rm{ED}}_{{\rm c}_{i,2}}$, that are assigned to the cluster ${\rm c}_i$. The D2D session is initiated at $t_{{\rm c}_i}$ via an NS downlink command as indicated in Fig. \ref{D2D}. Afterward, one ED starts its packet transmission using LoRa modulation (initiator mode) while the other ED enables its receiving window (RW) (scanner modes). After completion of the first data exchange, EDs switch modes and the data transfer continues. At the end of the D2D session, each ED waits a random time and transmits its original data to the GW. Then, they wait for another random period and transmit their parity signal obtained as linear combination of original packets to the GW. It should be noted that the detailed time scheduling of the presented protocol is out of the scope of this work. However, these timing requirements can be satisfied using an effective time scheduling algorithm taking into account the satellite trajectory such as the one suggested in \cite{Afhamisis_2022}.

After receiving all of a cluster's packets, the GW first tries to decode each packet separately using the standard LR-FHSS method. The NS is then sent the decoded packets exploiting methods like Gaussian elimination to retrieve both original frames from any two of the four received ones in a subsequent stage. Therefore, aside from the extra network coding/decoding modules needed for the D2D-aided LR-FHSS scheme, the rest of the transmission process follows the standard LR-FHSS principles.  

\begin{figure}[t!]
  \centering
  \includegraphics[width=\linewidth]{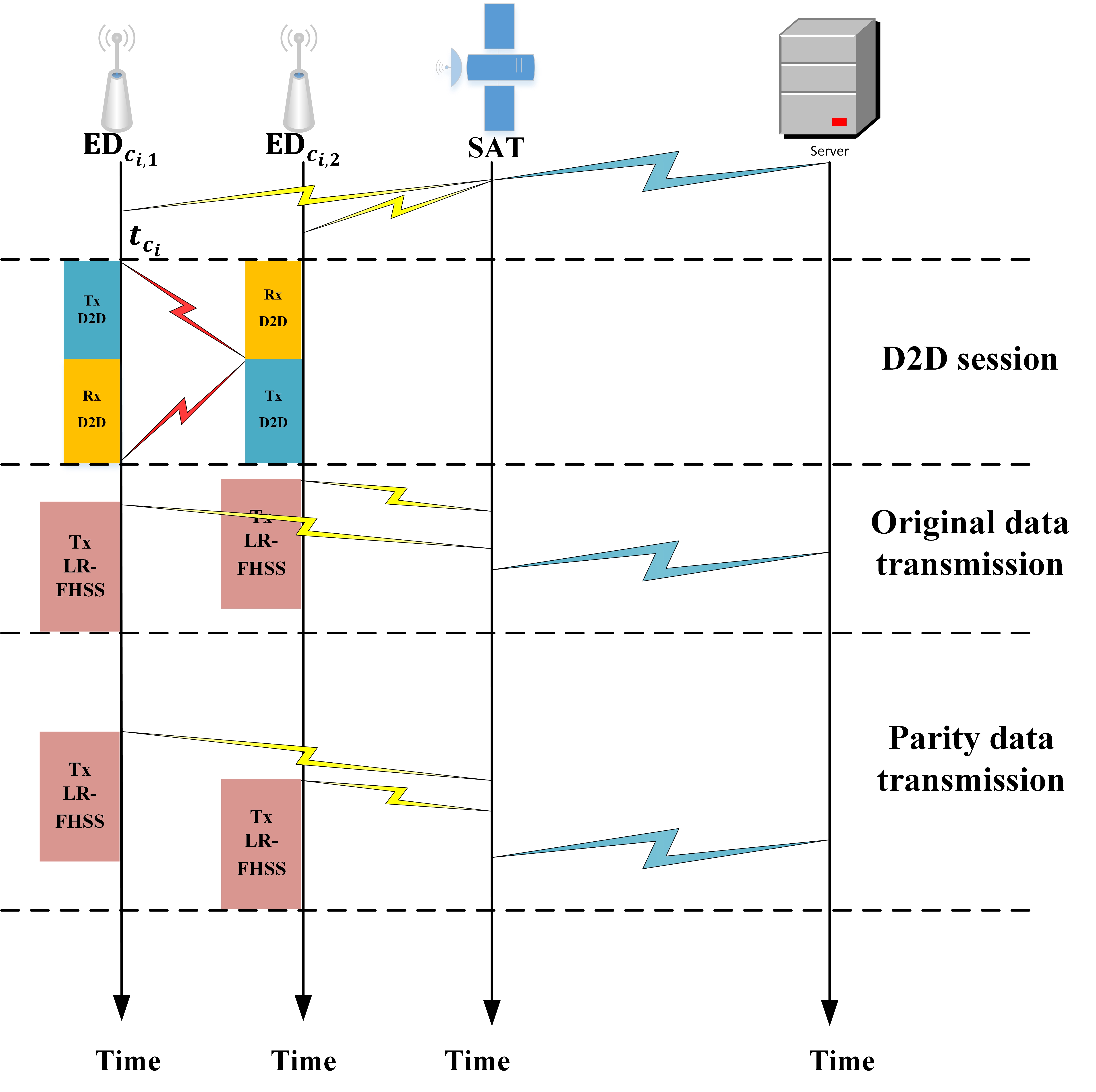}
  \caption{D2D-aided LR-FHSS time scheduling.}
  \label{D2D}
\end{figure}

\subsection{D2D-aided LR-FHSS Outage Probability Analysis}
First, we need to calculate the probability of successful D2D communication. Considering the random distribution of EDs, the probability of having an ED ($\rm ED_{ne}$) with a distance lower than $d_{\rm max}$ to ${\rm ED}_0$ can be expressed as (neighboring probability) \cite{Oliveira_2022}:

\begin{equation}
\label{P_neigh}
P_{\rm ne}=1-\exp (-\rho A_{\rm D2D}),
\end{equation}
where $\rho$ is the intensity of the PPP and $A_{\rm D2D}=\pi d_{\rm max}^2$. Therefore, the probability of successful D2D communication is written as:

\begin{equation}
\label{P_D2D}
P_{\rm D2D}=(1-P_{\rm LoRa})P_{\rm neigh},
\end{equation}
where $P_{\rm LoRa}$ is the outage probability of packet exchanging procedure between ${\rm ED}_0$ and $\rm ED_{ne}$ using LoRa communication. It is worth mentioning that if the D2D communication fails, both EDs will switch to RT scheme by setting the linear coefficients relating to the cooperating partner to zero when generating the parity packet. This makes the transition from network coding to RT scheme very simple to perform in practice.

Upon successful reception of the original packets $o_{\rm ne}$ and $o_0$ by ${\rm ED}_0$ and $\rm ED_{ne}$, respectively, they will generate parity packets using the network coding approach presented in \cite{Rebelatto_2012,Xiao_2010}. In an uplink network coding strategy, nodes take advantage of the wireless channel's broadcast characteristics to communicate with and support one another. A given node's relayed data is a linear combination of its information and the information from other cooperating partners conducted over a non-binary finite field ${\rm GF}(4)$. The coefficients for the linear combinations are selected to result in a maximum-distance separable (MDS) code for which maximum achievable diversity is guaranteed \cite{Rebelatto_2012}. Therefore, parity packets of $p_0 = o_0 \boxplus o_{\rm ne} $ and $p_{\rm ne} = o_0\boxplus 2 \boxtimes o_{\rm ne}$ are generated by ${\rm ED}_0$ and $\rm ED_{ne}$, respectively, in which $\boxplus$ and $\boxtimes$ are summation and multiplication operations, respectively, over ${\rm{GF}}(4)$. 

By defining the received vector corresponding to the cluster of ${\rm ED}_0$ and $\rm ED_{neigh}$ as $\mathbf{r}=[o_0,o_{\rm ne},p_0,p_{\rm ne}]$, we present the definition of D2D-aided LR-FHSS outage as follows:

\begin{definition}
\label{def3}
The packet $o_0$ of ${\rm{ED}}_0$ will be in outage, if the D2D is established, and $o_0$ and at least two of the $o_{\rm ne}$, $p_0$ and $p_{\rm ne}$ are lost, or D2D cannot be established, and both transmission and retransmission of $o_0$ are lost.
\end{definition}

Finally, based on Definition \ref{def3}, the outage probability of the D2D-aided LR-FHSS can be given as:

\begin{IEEEeqnarray}{rCl}
\label{P_out_D_Long}
O_{\rm D}&=&P_{\rm D2D}O_{{\rm L},o_0}\Big{[}O_{\rm L,o_{ne}}\left({1-O}_{{\rm L},p_0}\right)\left(1-O_{{\rm L},p_{{\rm ne}}}\right)\nonumber \\
&&+O_{{\rm L},o_{\rm ne}}O_{{\rm L},p_0}\left(1-O_{{\rm L},p_{\rm ne}}\right)+O_{{\rm L},o_{\rm ne}}O_{{\rm L},p_0}O_{{\rm L},p_{\rm ne}}\Big{]}\nonumber\\
&&+\left(1-P_{\rm D2D}\right)O_{{\rm L},o_0}^2.
\label{eq:dont_use_multline}
\end{IEEEeqnarray}

Considering that the satellite orbital height is much larger than the distance between two cooperating EDs, i.e., $H_{\rm s}\gg d_{\rm max}$, we can assume the distances from node to satellite for both EDs are approximately equal and all four packets in a cluster experience similar outage performance. Therefore, we can approximate $O_{\rm L}\approx O_{{\rm L},o_0}\approx O_{{\rm L},o_{\rm ne}}\approx O_{{\rm L},p_0}\approx O_{{\rm L},p_{\rm ne}}$. Hence, we obtain: 

\begin{IEEEeqnarray}{rCl}
\label{P_out_D}
O_{\rm D} \approx P_{\rm{D2D}}\left[O_{\rm L}(-2O_{\rm L}^{3}+3O_{\rm L}^{2})\right] + (1-P_{\rm{D2D}})O_{\rm L}^2.
\label{eq:dont_use_multline}
\end{IEEEeqnarray}

\section{Numerical Results}
Simulation results is presented in this section to evaluate the performance of the LR-FHSS and D2D-aided LR-FHSS schemes and verify the mathematical analysis presented throughout the paper. The parameters used in the simulation process are summarized in Table \ref{simpar}.

\begin{table*}[t!]
\caption{Simulation parameters.}
\label{simpar}
\centering
\begin{tabularx}{\textwidth}{Xcc}
\toprule[1.0pt]
LEO satellite coverage radius ($\mathcal{R}$) \cite{12} & $2,209$ km \\
Satellite orbital height ($H_{\rm s}$) \cite{12} & $780$ km \\
Satellite speed ($\nu$) \cite{12} & $7.4$ km/s \\
LR-FHSS transmitted power ($P_{\rm t}$) \cite{LR111} & $30$ dBm \\
LR-FHSS signal bandwidth ($B_{\rm OBW}$) \cite{9} & $488$ Hz\\
Transmit antenna gain ($G_{\rm t}$) \cite{12} & $2.5$ dBi \\
Gateway antenna gain ($G_{\rm r}$) \cite{12} & $22.6$ dBi \\
Carrier frequency & $905.4385$ MHz \\
Header duration ($T_{\rm HDR}$) \cite{9} & $233$ ms \\
Header replicas ($N_{\rm HDR}$) \cite{9} & $3$ (DR5) and $2$ (DR6)\\
Payload fragment duration ($T_{\rm PL}$) \cite{9} & $102$ ms \\
Payload & $30$ bytes \\
Payload fragments \cite{9} ($N_{\rm PL}$) & $5$ \\
LR-FHSS time on air & $1.209$ s (DR5) and $0.976$ s (DR6)\\
LR-FHSS receiver's SNR threshold ($\psi$) \cite{LR111} & $3.96$ dB \\
Sum-interference cancellation threshold ($\delta$) \cite{12} & $6$ dB \\ 
Noise figure (NF) & $6$ dB \\
Simulation time ($T$) & 291.1 s\\
\bottomrule[1.0pt]
\end{tabularx}
\end{table*}

Please not that the simulation time $T$, mentioned as the time slot in the equation (\ref{duty_cycle}) is set to be $291.1$ seconds in our simulations. This selection is based on coping with $1\%$ duty cycle constraint of LoRaWAN protocol for the worst-case scenario in the D2D-aided LR-FHSS scheme in which an ED can complete all three main sessions as discussed in Section~IV. In such a case, the duty cycle should include one LoRa D2D transmission (using SF12 since it has the largest ToA) and two LR-FHSS transmissions (using DR5). Therefore, we have a total time of $2{\rm ToA_{p (DR5)}}+{\rm ToA_{p (SF12)}}$ for an ED transmission. Using the $30$ bytes of payload, a bandwidth of $500$ kHz for LoRa communication, and a maximum duty cycle of $1\%$ in the United States region, we obtain $T=291.1$ s. It is also worth noting that for fairness in comparison between LR-FHSS and D2D-aided LR-FHSS, we assume the same time slot for LR-FHSS as for D2D-aided LR-FHSS.

It is also worth noting that for the sake of simulations, the location of each ED is generated uniformly according to a PPP over the satellite’s coverage area on Earth. This results in a random nature for the path-loss term $g_m$ as indicated by conditional probabilities of (\ref{P_disc}) and (\ref{P_cap}). Although in practice, we assume that the NS may have knowledge about the location of each ED, to capture the effect of this randomness in the simulation, the probabilities of $P_{\rm{disc}}$ and $P_{\rm cap}$ are obtained by averaging over $10000$ realizations of device locations.

According to (\ref{P_disc}), $P_{\rm disc}$ captures the effect of channel noise, channel fading, and path loss on the outage performance of the proposed scheme. In Fig. \ref{disc}, the probability of disconnection is illustrated for different shadowing conditions mentioned in Table \ref{ShEnv} to show the effect of fading environments. The series in (\ref{P_disc_close}) is calculated using its first $10$ terms for the analytical curve, and the numerical one is obtained by performing numerical integration on (\ref{ap1}). As can be seen, for a certain value of transmitted power, $P_{\rm disc}$ is heavily impacted by the shadowed-Rice fading parameters. The infrequent shadowing environment measurements in which the DtS communication exploits a relatively strong LoS link \cite{13} results in the best performance in terms of disconnection probability. However, in practice, this is not always the case. Therefore, for the rest of the simulation parts, we assume that the DtS link behaves according to the average shadowing environment in terms of channel fading.

\begin{figure}[t!]
  \centering
  \includegraphics[width=\linewidth]{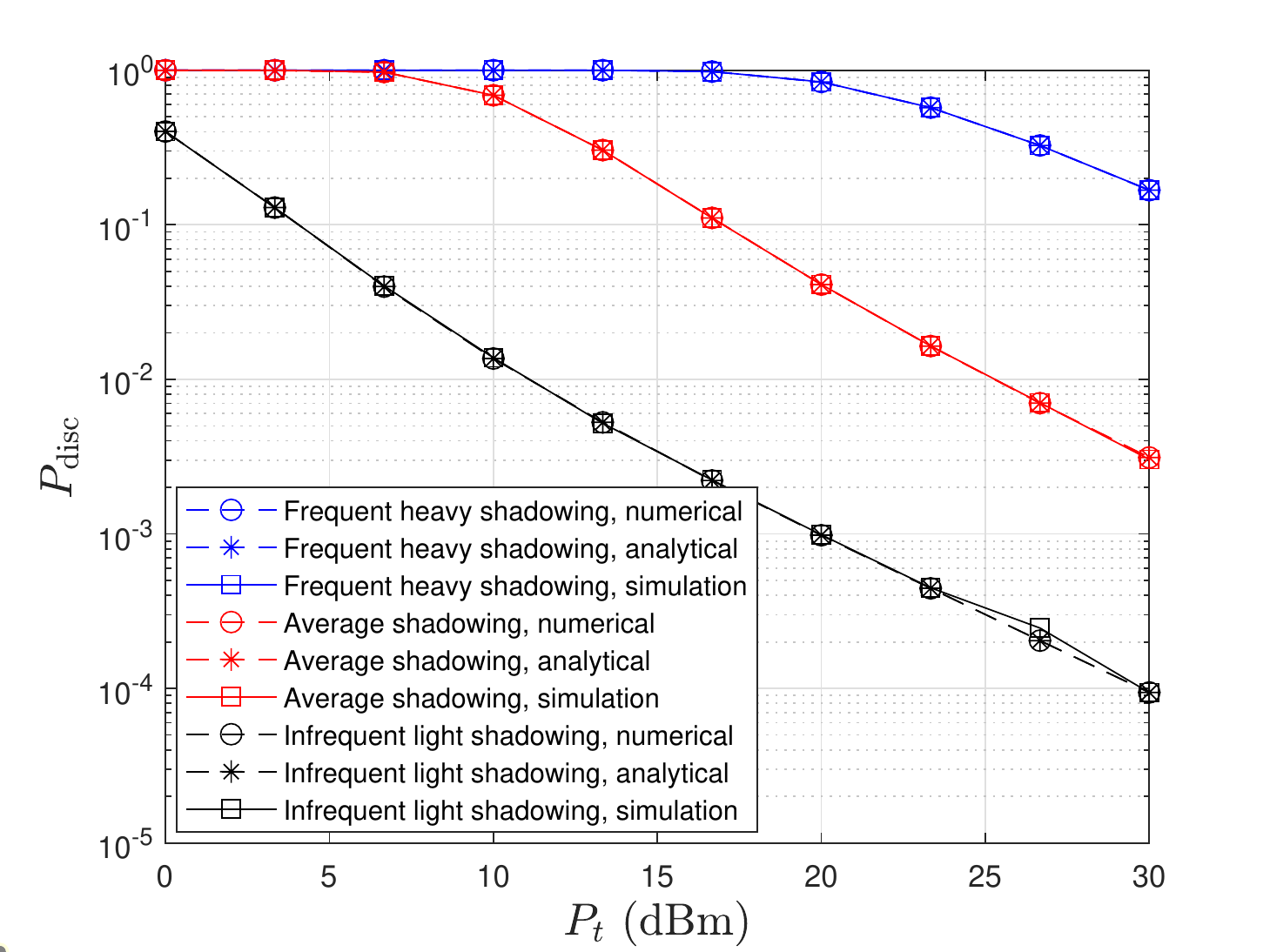}
  \caption{Probability of disconnection for three different shadowing environments as function of LR-FHSS transmission power.}
  \label{disc}
\end{figure}

Although the channel fading, noise, and path loss are encompassed by the disconnection probability, capture failure probability merely depicts the effect of interference as indicated in Fig. \ref{cap}. As can be seen, in the case of having more than $8$ interfering devices, the packet fragment will be dropped with a very high probability. However, it should be noted that the value of $6$ dB for $\delta$ is selected as in \cite{12}, since technical information about the LR-FHSS sum-interference cancellation threshold is not available to the author's best knowledge. Moreover, for plotting the analytical $P_{\rm cap}$, (\ref{P_cap_close}) is calculated using the first $10$ terms of the series $\sum_{n=0}^{\infty}(.)$. It is noteworthy that the parameter $\alpha$ controls the convergence of the series $\sum_{i=0}^{\infty}(.)$, and hence, the number of series terms needed to obtain the desired accuracy. We set $\alpha=3.9999 \times \min\limits_{m}\{b_0 g_m\}$ using trial and error resulting in a relatively fast convergence with the accuracy of $5\%$ compared to the results obtained by simulation.

\begin{figure}[t!]
  \centering
  \includegraphics[width=\linewidth]{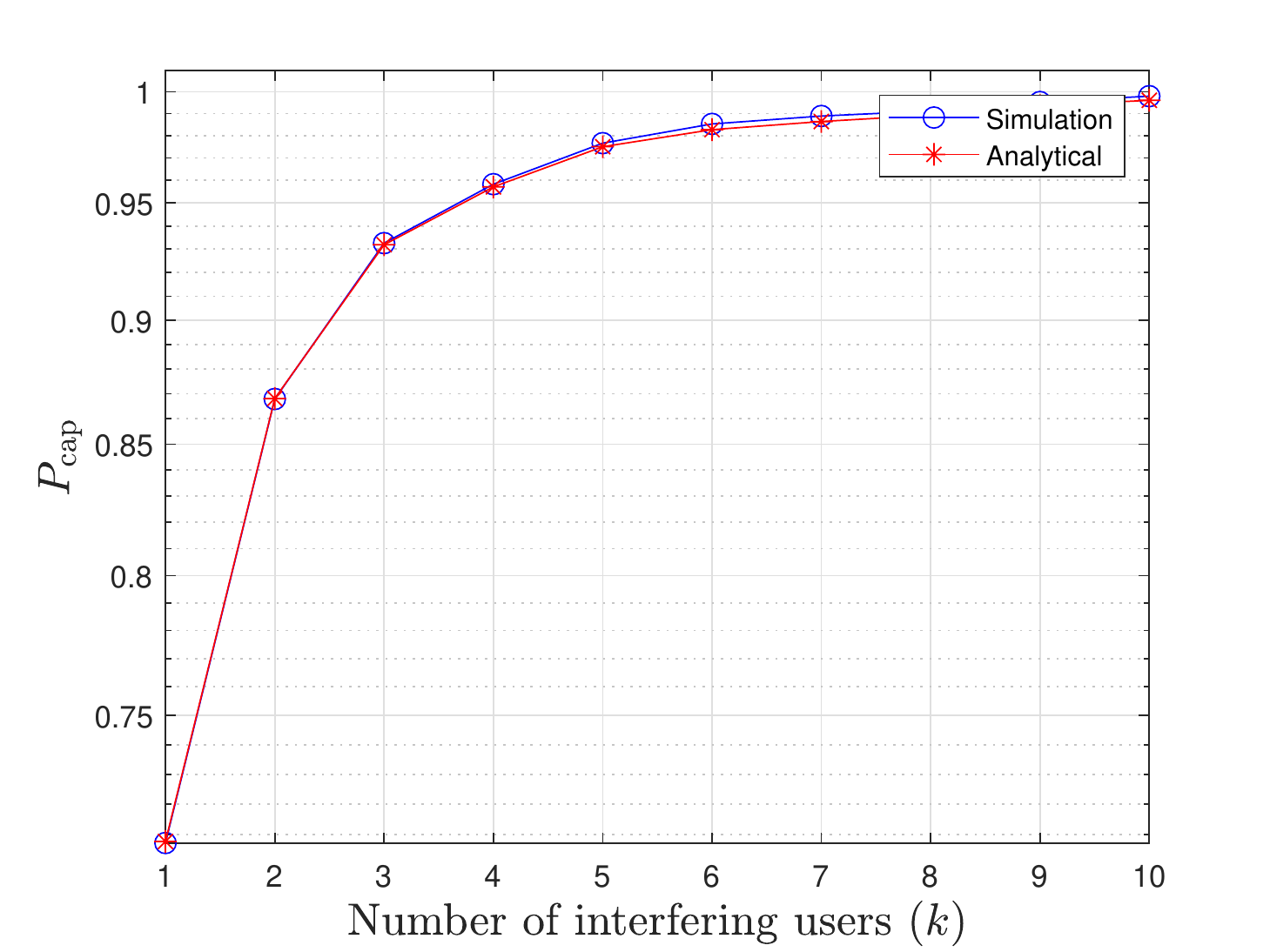}
  \caption{Probability of receiver’s failure in detecting the desired signal in the presence of interfering devices.}
  \label{cap}
\end{figure}

Considering the fact that the time scheduling scheme of LR-FHSS is based on LoRaWAN unslotted ALOHA, the first condition that should be satisfied for an ED to be considered as a potential co-channel interference is to interfere with the desired ED in the time domain according to Fig. \ref{Int}, the evaluation of the average numbers of time domain interfering devices for different number of EDs existing in the network are presented in Fig. \ref{Interference}. Based on these results, on average, $0.7\%$, $0.96\%$, and $0.6\%$ of the total number of EDs in the network cause interference as potential co-channel interfering devices, header interferences, and payload interferences for DR6, respectively. Similarly, for DR5, these values are $0.88\%$, $0.11\%$ and $0.75\%$.

\begin{figure}[t!]
  \centering
  \includegraphics[width=\linewidth]{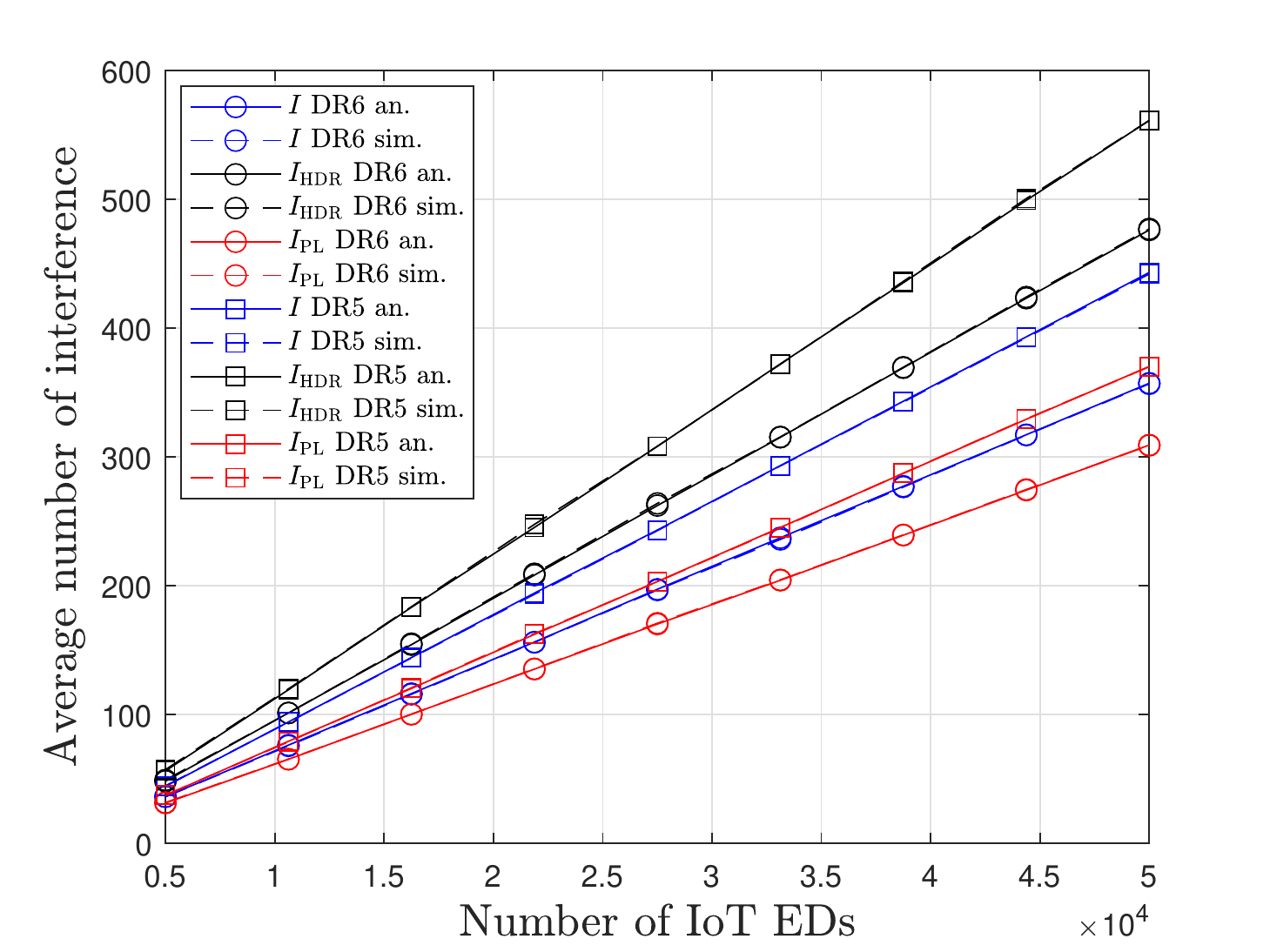}
  \caption{Analytical and simulation results for the average number of time domain interferences in unslotted ALOHA LR-FHSS scheme.}
  \label{Interference}
\end{figure}

The outage probability of the LR-FHSS scheme is indicated in Fig. \ref{capnocap}. As can be seen, the curves corresponding to no capture effect case show a degraded performance compared to the actual LR-FHSS performance in which the capture effect of the receiver is taken into account using the analysis provided in this paper. In fact, in case of no capture effect, we assume that $P_{\rm cap}(k)=1$ regardless of the number of interferences as presented in the analysis of \cite{12}. Moreover, for the satellite speed of $7.4$ km/s, one can calculate the coverage area $|\mathcal{F}|=2.4847\times 10^{7}$ $\rm{km}^2$. Hence, we can observe that for a typical outage threshold of $10^{-2}$, the LR-FHSS scheme can serve nearly $497,000$ and $1,490,000$ EDs for DR6 and DR5, respectively, in the duration of a time slot.

\begin{figure}[t!]
  \centering
  \includegraphics[width=\linewidth]{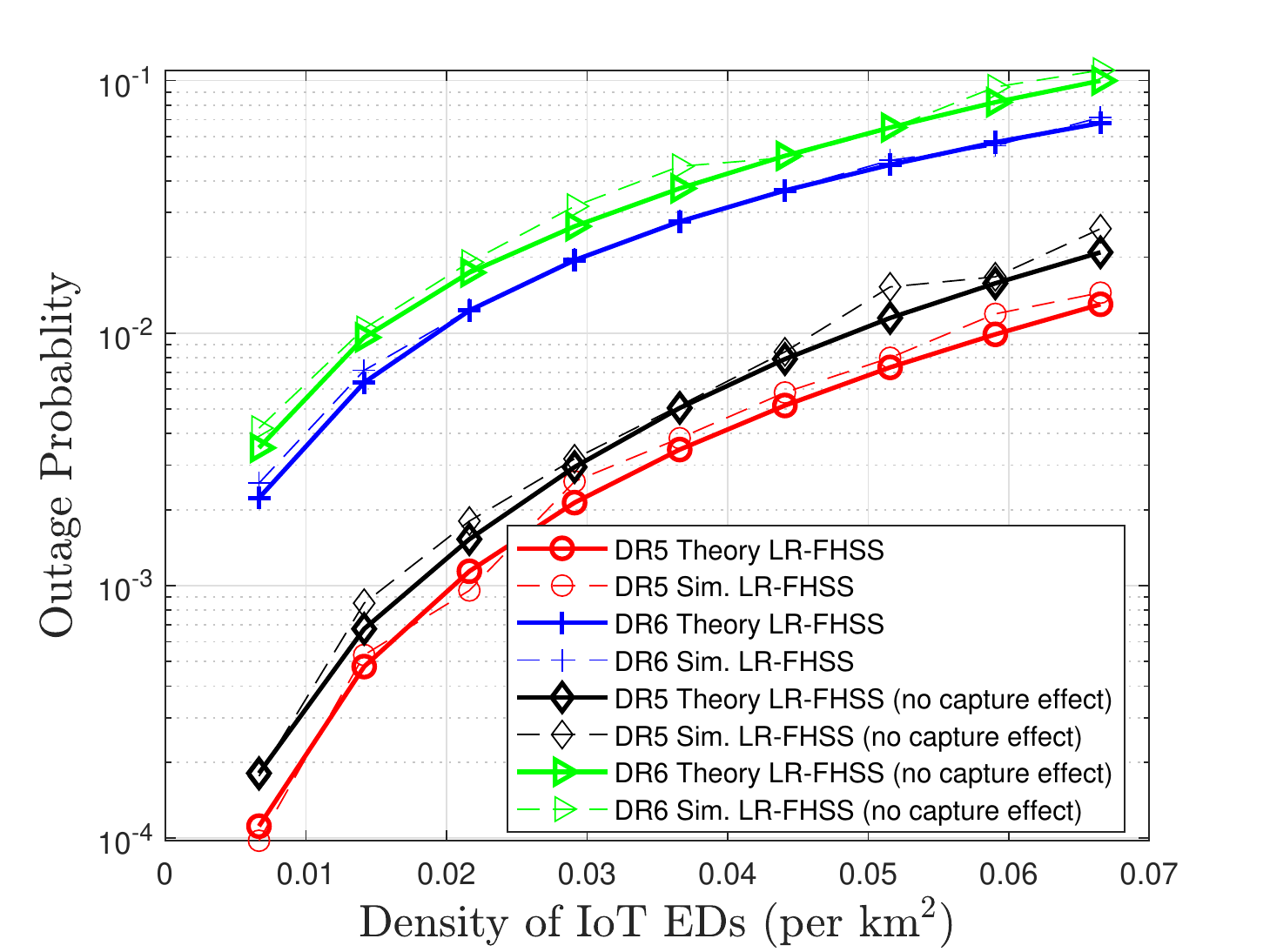}
  \caption{Outage probability of LR-FHSS in average shadowed-Rice fading environment.}
  \label{capnocap}
\end{figure}

Fig. \ref{coop} indicates the probability of successful D2D communication. It is noteworthy that we assume the probability of D2D packet exchange as $P_{\rm LoRa}=0.9$ in our simulations. As the first transmission session of the D2D-aided LR-FHSS scheme, it can be observed that for a network density greater than $0.3$ EDs per ${\rm km}^2$, we can expect an acceptable probability of $80\%$ for two EDs to form a cluster and exchange their packets using LoRa modulation. The total outage probability of D2D-aided LR-FHSS is presented in Fig. \ref{out}. It can be seen that the integration of D2D communication with the LR-FHSS scheme can significantly improve the performance in terms of network capacity in lower network densities. As an example, for a typical outage threshold of $10^{-2}$, using the same satellite speed as in the LR-FHSS case, the D2D-aided LR-FHSS framework can serve approximately $1,242,000$ and $2,236,000$ EDs for DR6 and DR5 data rates, respectively. These translate to $249.9\%$ and $150.1\%$ increase in network capacity compared to LR-FHSS for DR6 and DR5 scenarios, respectively. Obviously, this is obtained at the cost of minimum one (when D2D cannot be established) and maximum of two additional transmissions (in case of having a successful D2D data exchange) per each IoT ED in the duration of $T$. It is also worth mentioning that, considering both Fig. \ref{coop} and Fig. \ref{out}, one can realize that when the network density increases in the D2D-aided LR-FHSS scenario, and as a result, the probability of interference occurrence grows, the probability of successful D2D communication initially becomes larger too (before reaching its maximum value of $P_{\rm LoRa}$). This helps mitigate the effect of interference and achieve better performance compared to LR-FHSS.      

\begin{figure}[t!]
  \centering
  \includegraphics[width=\linewidth]{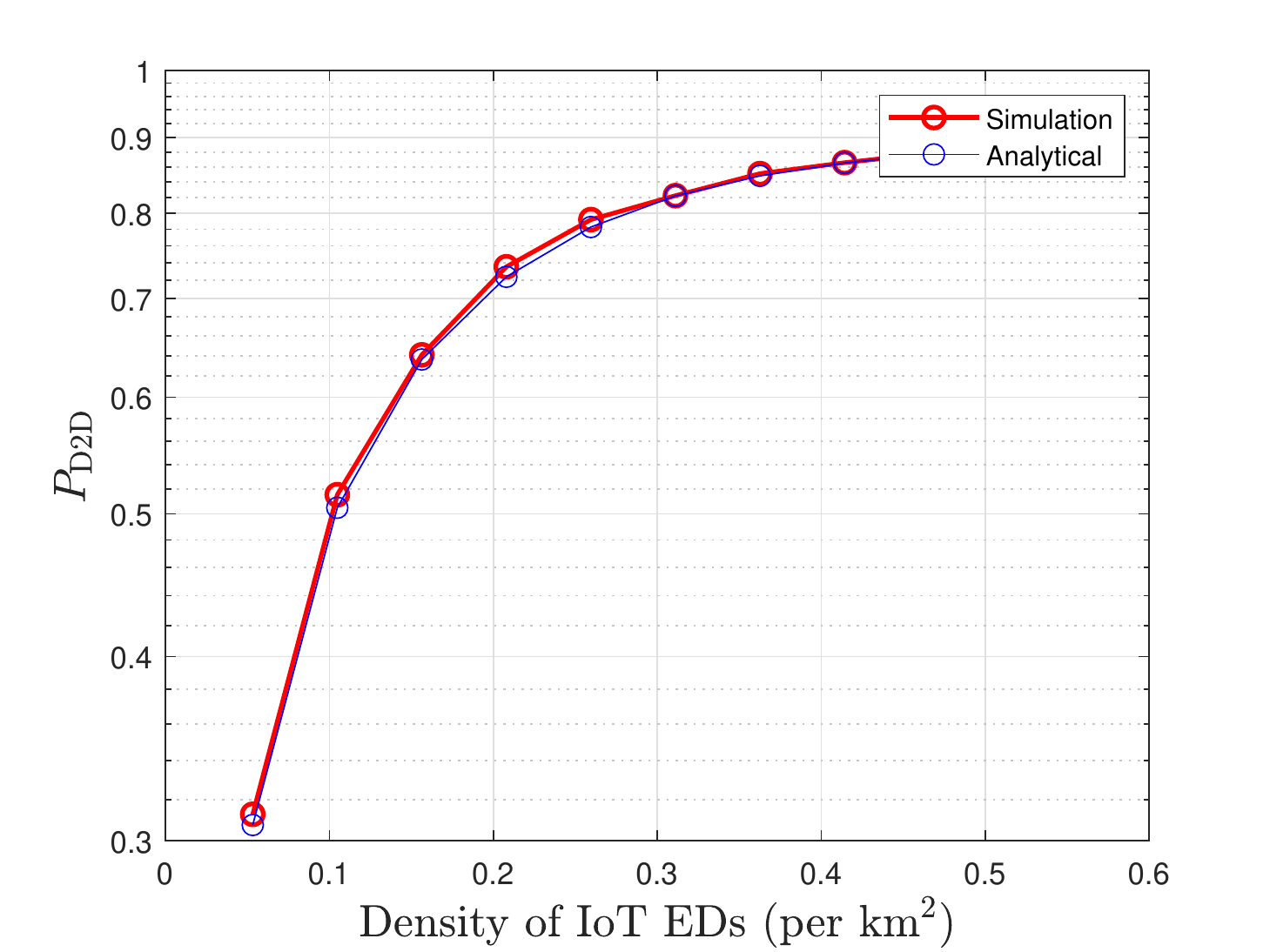}
  \caption{Probability of successful D2D communication as function of EDs density.}
  \label{coop}
\end{figure}

\begin{figure}[t!]
  \centering
  \includegraphics[width=\linewidth]{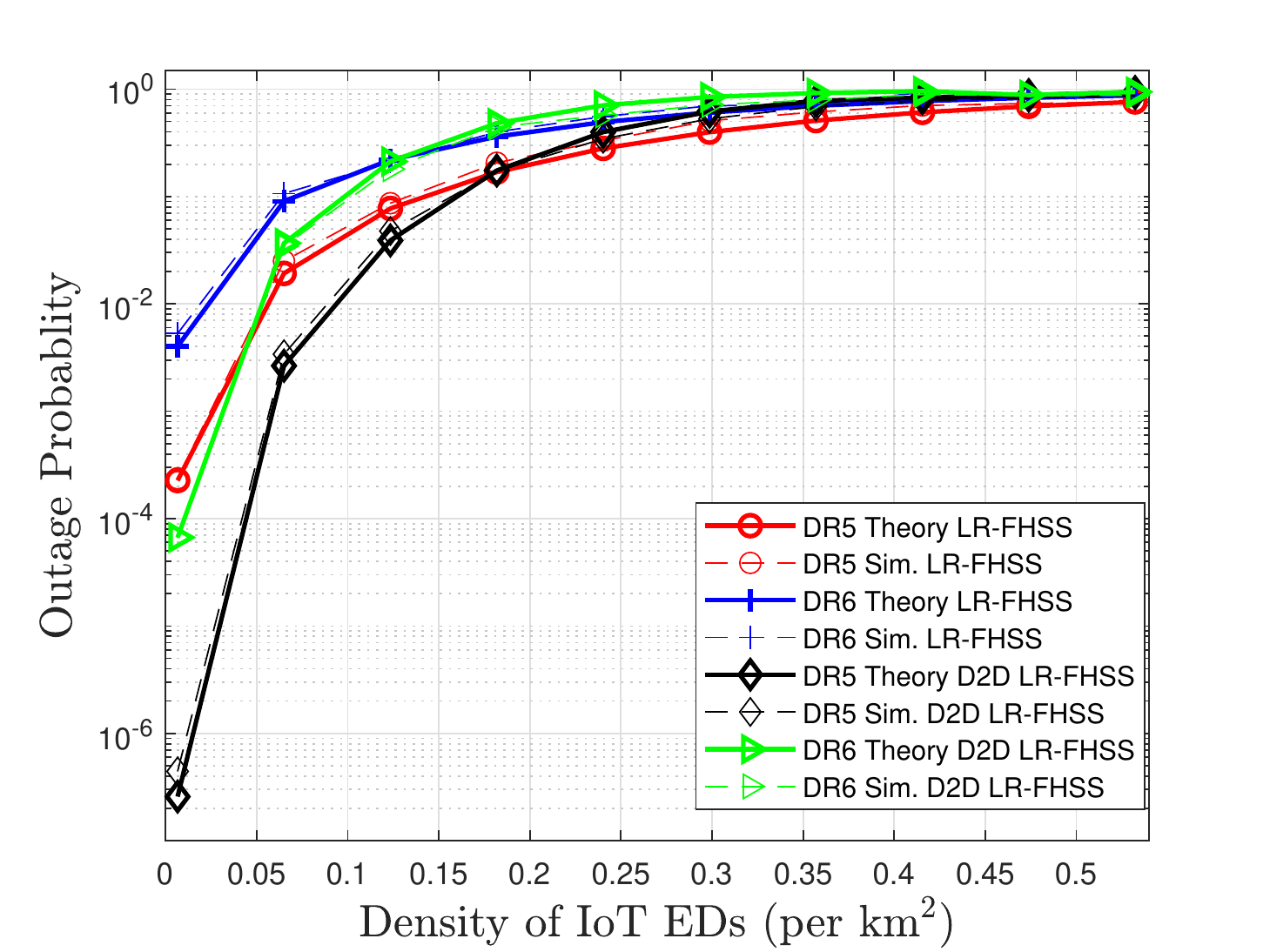}
  \caption{Outage probability of D2D-aided LR-FHSS.}
  \label{out}
\end{figure}

In the presented DtS IoT network scenario, based on the random timing of starting transmission for an ED, the distance from the ED to the satellite can be a random value between $H_{\rm s}$ and the slant range of the satellite. This is equal to starting the transmission at a random elevation angle between $90^{\circ}$ to $\alpha_{\rm SR}$ (which depends on the slant range of the satellite). As discussed before, we addressed this randomness in our simulation by averaging over $10000$ realizations of ED locations. However, as the final part of our simulation results, to see the effect of the location of an ED when it starts its transmission on the outage probability of its packet, we present the outage probability as a function of distance to the satellite in Fig. \ref{dist}. As can be seen, satellite movements can significantly affect the outage performance for a single ED transmission. Based on the results of Fig. \ref{dist}, one can understand the importance of proposing an effective time scheduling algorithm for D2D-aided LR-FHSS application in DtS IoT networks which is left as a future potential research problem.

\begin{figure}[t!]
  \centering
  \includegraphics[width=\linewidth]{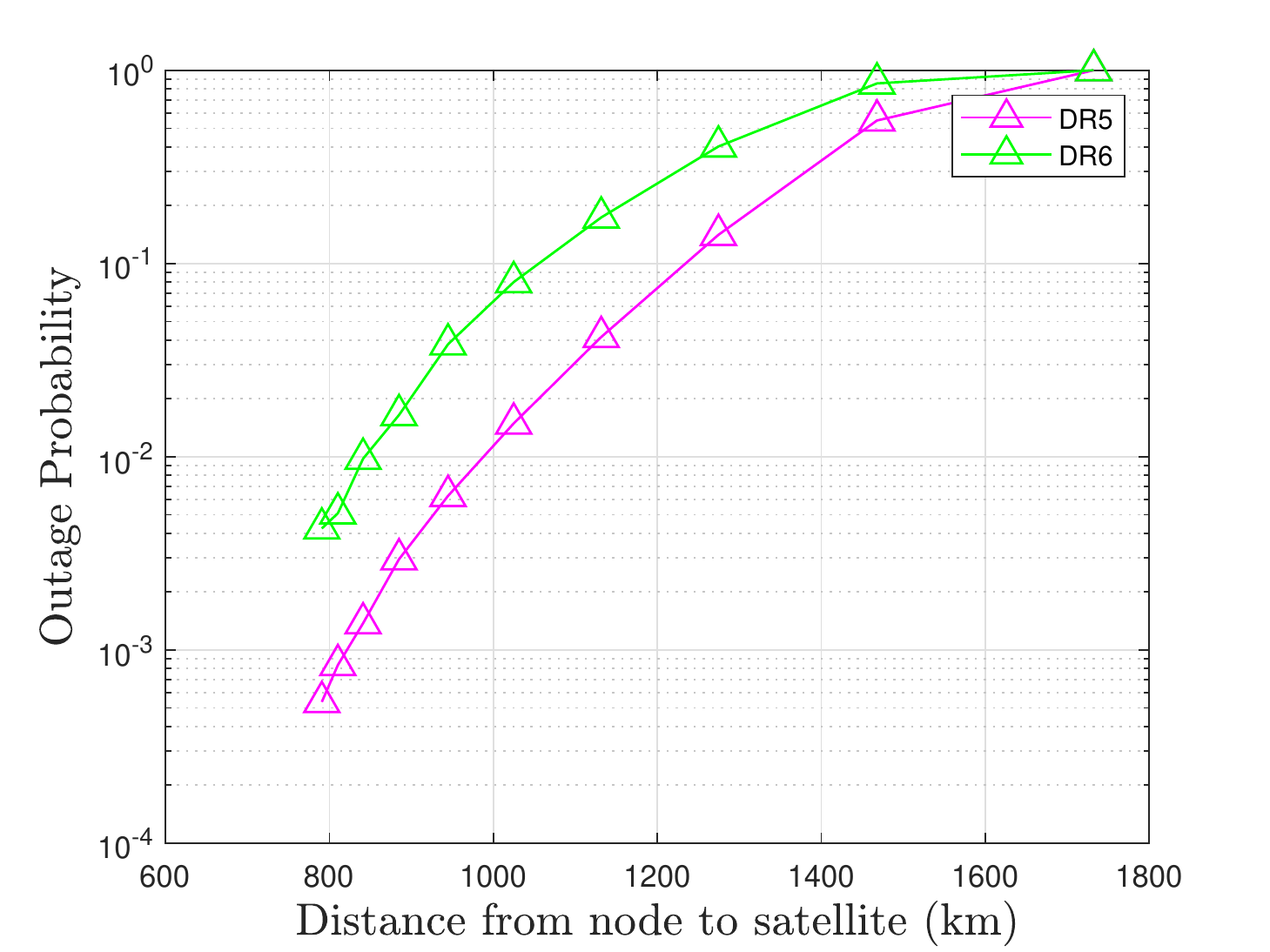}
  \caption{Outage probability of D2D-aided LR-FHSS as function of distance from node to satellite.}
  \label{dist}
\end{figure}

\section{Conclusion}
By considering cooperation between EDs in a DtS-IoT network in form of D2D communication and by exploiting a simple network coding scheme, we propose a D2D-aided LR-FHSS transmission framework with the motivation of being used in dense IoT applications. A detailed analysis is provided to obtain closed-form expressions of the outage probabilities of both LR-FHSS and D2D-aided LR-FHSS using practical channel modeling. The computer simulations validated the analytical expressions. The results show a significant boost in network performance of D2D-aided LR-FHSS in terms of the density of EDs served by a single IoT GW installed on a LEO satellite compared to LR-FHSS protocol ($249.9\%$ and $150.1\%$ increase in network capacity at a typical outage of $10^{-2}$ for DR6 and DR5, respectively). This is obtained at the cost of additional transmissions by each IoT ED.

\appendices

\section{Proof of lemma 1}
Based on the definition of probability of disconnection in (\ref{P_disc}), we have:

\begin{equation}
\label{ap1}
P_{\rm{disc}}={\rm{Pr}}\left\{\frac{P|h_0|^2 g(\alpha_0)}{\sigma^2}\leq\psi\right\}=\int_{0}^{\frac{\psi\sigma^2}{Pg_0}}{\mathfrak{f}_{R}\left(r\right)}\mathrm{d}r,
\end{equation}
where $\mathfrak{f}_R(r)$ is the PDF of the power of shadowed-Rice faded signal and can be represented as \cite{13}:

\begin{equation}
\label{ap2}
\mathfrak{f}_{R}(r)=A \exp\left(-Br\right) \times {{}_{1}F_{1}}\left[ m,1,C(1)r\right].
\end{equation}
On the other hand, for the confluent hypergeometric function can be expressed as \cite{Int_2014}:

\begin{equation}
\label{ap3}
{{}_{1}F_{1}}(a,b,x)=\sum\limits_{n=0}^{\infty} \frac{(a)_n}{n! (b)_n}x^n.
\end{equation}

Therefore, we can calculate $P_{\rm disc}$ as follows:

\begin{IEEEeqnarray}{rCl}
\label{ap4}
P_{\rm{disc}}&=&A \int_{0}^{\frac{\psi\sigma^2}{Pg_0}} {\exp\left(-Br\right) \sum\limits_{n=0}^{\infty} \frac{(m)_n}{n! (1)_n} C(n) r^n \mathrm{d}r},\nonumber \\
&=& A \sum\limits_{n=0}^{\infty} \frac{(m)_n}{n! n!} C(n) \int_{0}^{\frac{\psi\sigma^2}{Pg_0}} \exp\left(-Br\right) r^n \mathrm{d}r, \nonumber \\
&=&  A \sum\limits_{n=0}^{\infty} \frac{(m)_n}{n! n!} C(n) (\frac{1}{B})^{n+1}\int_{0}^{B\frac{\psi\sigma^2}{Pg_0}} e^{-t}  t^n \mathrm{d}t,\nonumber \\
&=& A \sum\limits_{n=0}^{\infty} \frac{(m)_n}{n! n!} C(n) (\frac{1}{B})^{n+1} \gamma\left(n+1,{B\frac{\psi\sigma^2}{Pg_0}}\right).\nonumber \\
\label{eq:dont_use_multline}
\end{IEEEeqnarray}

\section{Proof of lemma 2}

In obtaining a closed-form expression for the capture failure probability as expressed in (\ref{P_disc}), the first step is to find a CDF for the denominator $X_k=\sum_{m=1}^{k}|h_m|^2 g_m$. To obtain the CDF, we follow the same approach used in \cite{Kotz_1967,Ropokis_2009} which is based on the inverse Laplace transform of moment-generating function (MGF). Considering (\ref{ap2}), the MGF of $r$ is given as \cite{13}:

\begin{equation}
\label{ap5}
M_r(s) = \frac{(1-2b_0s)^{m-1}}{\left[1-(2b_0+\frac{\Omega}{m})s\right]^m}.
\end{equation} 
In $X_k$, we define $q_i = r_ig_i$, and considering that the MGF of $y = ax$ equals to $M_x(as)$, we can write:

\begin{equation}
\label{ap6}
M_{q_i}(s) = \frac{(1-2g_ib_0s)^{m-1}}{\left[1-(2b_0+\frac{\Omega}{m})g_is\right]^m}.
\end{equation}
Knowing that $q_i$'s are independent non identically distributed, the Laplace transform of PDF of $X_k$ can be written as follows:

\begin{equation}
\label{ap7}
L_{X_k}(s) = \prod\limits_{i=1}^{k}\frac{(1+2g_ib_0s)^{m-1}}{\left[1+(2b_0+\frac{\Omega}{m})g_is\right]^m}.
\end{equation}
Now, we need to calculate the inverse Laplace transform of $L_{X_k}(s)$ to obtain the PDF of $X_k$. First, we define $b_i=b_0g_i$ and $\Omega_i = \Omega g_i$. Then, we have:

\begin{equation}
\label{ap8}
L_{X_k}(s) = \prod\limits_{i=1}^{k}\frac{(1+2b_is)^{m-1}}{\left[1+(2b_i+\frac{\Omega_i}{m})s\right]^m}.
\end{equation}
We define a function $\eta(s)=1/(1+\alpha s)$ where $\alpha$ is an arbitrary positive parameter. On the other hand, for any $\beta>0$, we have:

\begin{equation}
\label{ap9}
1+\beta s = 1 + s\alpha \frac{\beta}{\alpha} + \frac{\beta}{\alpha} - \frac{\beta}{\alpha} = \frac{\beta}{\alpha \eta(s)} \left[1-\left(1-\frac{\alpha}{\beta}\right)\eta(s)\right].
\end{equation}
Therefore, (\ref{ap8}) can be rewritten as:

\begin{equation}
\label{ap10}
L_{X_k}(s) = D\eta^k(s)\prod\limits_{i=1}^{k}\frac{[1-\zeta_i \eta(s)]^{m-1}}{\left[1-\delta_i\eta(s)\right]^m},
\end{equation}
where

\begin{equation}
\label{ap11}
D=\alpha^k\prod\limits_{i=1}^{k}\frac{(2b_i)^{m-1}}{\left(2b_i+\frac{\Omega_i}{m}\right)^m},
\end{equation}

\begin{equation}
\label{ap12}
\zeta_i = 1-\frac{\alpha}{2b_i},
\end{equation}
and
\begin{equation}
\label{ap13}
\delta_i = 1-\frac{\alpha}{2b_i+\frac{\Omega_i}{m}}.
\end{equation}
Now, we define the function
\begin{equation}
\label{ap14}
Z(\eta) = \prod\limits_{i=1}^{k}\frac{[1-\zeta_i \eta]^{m-1}}{\left[1-\delta_i\eta\right]^m}.
\end{equation}
Taking the natural logarithm $\ln$ of both sides, we have:

\begin{equation}
\label{ap15}
\ln Z(\eta) = (m-1)\sum\limits_{i=1}^{k}\ln (1-\zeta_i \eta)-m\sum\limits_{i=1}^{k}\ln (1-\delta_i \eta).
\end{equation}
When $\eta<1/\max{\left\{\max\limits_{i}{\left\{\left|\zeta_i\right|\right\}},\max\limits_{i}{\left\{\left|\delta_i\right|\right\}}\right\}}$, i.e., the arguments of both $\ln$ functions are positive for all of $i$'s, we have the following series expansion for $Z(\eta)$:

\begin{equation}
\label{ap16}
\ln Z(\eta) = \sum\limits_{j=1}^{\infty}p_j \frac{\eta^j}{j},
\end{equation}
where $p_j=\sum_{i=1}^{k}\left[m\delta_i^j-(m-1)\zeta_i ^j\right]$. Then, (\ref{ap14}) can be rewritten as \cite[pp.~93]{Mathai_1992}:

\begin{equation}
\label{ap17}
 Z(\eta) = \sum\limits_{i=0}^{\infty} c_i \eta^i,
\end{equation}
and the coefficients of $c_i$'s are calculated recursively with $c_0=Z(0)$ as:

\begin{equation}
\label{ap18}
c_i=\frac{1}{i}\sum\limits_{l=0}^{i-1}p_{i-l}c_l.
\end{equation}
From (\ref{ap10}), (\ref{ap14}), and (\ref{ap17}), we have:

\begin{equation}
\label{ap19}
L_{X_k}(s) = D\sum\limits_{i=0}^{\infty}c_i\eta^{k+i}(s).
\end{equation}

On the other hand, for an integer $a$, we have the following inverse Laplace transform:

\begin{equation}
\label{ap20}
\mathcal{L}^{-1}\{\eta^a (s)\}=\frac{x^{(a-1)}\exp(-x/\alpha)}{\alpha^a (a-1)!}.
\end{equation}
Consequently, the PDF of $X_k$ can be obtained as:

\begin{equation}
\label{ap21}
f_{X_k}(x) = D\sum\limits_{i=0}^{\infty}c_i\frac{x^{k+i-1}\exp(-x/\alpha)}{\alpha^{k+i}(k+i-1)!}.
\end{equation}
To obtain the CDF, we use the formula $F_{X_k}(x)=\int_{0}^{x}f_{X_k}(\tau)\mathrm{d}\tau$. Therefore, we have:

\begin{equation}
\label{ap22}
F_{X_k}(x) = D\int_{0}^{x}\sum\limits_{i=0}^{\infty}c_i\frac{\tau^{k+i-1}\exp(-\tau/\alpha)}{\alpha^{k+i}(k+i-1)!}\mathrm{d}\tau.
\end{equation}
To be able to interchange integration and summation, uniform convergence of the series in (\ref{ap22}) is required. This result is established in \cite{Ropokis_2009}, where also the range of admissible values for parameter $\alpha$ is given as $0<\alpha<4\times \min\limits_{i}\{b_i\}$. Therefore, we have:

\begin{equation}
\label{ap23}
F_{X_k}(x) = D\sum\limits_{i=0}^{\infty}\frac{c_i}{\alpha^{k+i}(k+i-1)!}\int_{0}^{x}\tau^{k+i-1}\exp(-\tau/\alpha)\mathrm{d}\tau.
\end{equation}
The integral part can be calculated as:

\begin{equation}
\label{ap24}
\int_{0}^{x}\tau^{k+i-1}\exp(-\tau/\alpha)\mathrm{d}\tau=\alpha^{k+i}\gamma(k+i,x/\alpha).
\end{equation}
Since $k+i$ is a positive integer, we have:

\begin{equation}
\label{ap25}
\gamma(k+i,x/\alpha)=(k+i-1)!\left[1-e^{-x/\alpha}\left(\sum\limits_{u=0}^{k+i-1}\frac{(x/\alpha)^u}{u!}\right)\right].
\end{equation}
The CDF expression can be presented as:

\begin{equation}
\label{ap26}
F_{X_k}(x) = D\sum\limits_{i=0}^{\infty}c_i\left[1-e^{-x/\alpha}\left(\sum\limits_{u=0}^{k+i-1}\frac{(x/\alpha)^u}{u!}\right)\right].
\end{equation}
Using this result, we have:

\begin{IEEEeqnarray}{rCl}
\label{ap27}
P_{\rm cap}(k)&=&1-\mathbb{E}_{\left|h_0\right|^2}\left[\Pr{\left\{X_k<\frac{\left|h_0\right|^2g_0}{\delta}\right\}}\right] \nonumber \\
&=& 1- \int_{0}^{\infty} A e^{-Bz}\times {{}_{1}F_{1}}(m,1,C(1)z)F_{X_k}\left(\frac{zg_0}{\delta}\right)\mathrm{d}z\nonumber \\
&=& 1- \int_{0}^{\infty} A e^{-Bz}\times {{}_{1}F_{1}}(m,1,C(1)z) \nonumber\\
&&\times D\sum\limits_{i=0}^{\infty}c_i \left[1-e^{-\frac{zg_0}{\alpha\delta}}\sum\limits_{u=0}^{k+i-1}\frac{(zg_0)^u}{u!(\delta\alpha)^u}\right]\mathrm{d}z\nonumber \\
&=& 1 - AD\sum\limits_{i=0}^{\infty}c_i \underbrace{\int_{0}^{\infty}e^{-Bz}\times {{}_{1}F_{1}}(m,1,C(1)z)\mathrm{d}z}_{\rm{(I)}} \nonumber \\
&& + AD\sum\limits_{i=0}^{\infty}c_i \sum\limits_{u=0}^{k+i-1}\frac{1}{u!}\nonumber\\
&&\underbrace{\int_{0}^{\infty}e^{-z(B+\frac{g_0}{\alpha\delta})}\times(\frac{zg_0}{\alpha\delta})^u \times {{}_{1}F_{1}}(m,1,C(1)z) \mathrm{d}z}_{\rm{(II)}}.\nonumber \\
\label{eq:dont_use_multline}
\end{IEEEeqnarray}

The final step towards obtaining a closed-form expression for $P_{\rm{cap}}$ is to calculate the integrals (I) and (II). For (I), we have:

\begin{IEEEeqnarray}{rCl}
\label{ap28}
{\rm{(I)}}&=&\int_{0}^{\infty}e^{-Bz}\times {{}_{1}F_{1}}(m,1,C(1)z)\mathrm{d}z \nonumber \\
&=& \int_{0}^{\infty}e^{-Bz}\sum\limits_{n=0}^{\infty}\frac{(m)_n}{n!n!}C(n) z^n \mathrm{d}z \nonumber\\
&=& \sum\limits_{n=0}^{\infty}\frac{(m)_n}{n!n!}C(n) \int_{0}^{\infty}e^{-Bz} z^n \mathrm{d}z\nonumber \\
& =& \sum\limits_{n=0}^{\infty}\frac{(m)_n}{n!n!}C(n) \times (\frac{1}{B})^{n+1} \Gamma (n+1). 
\label{eq:dont_use_multline}
\end{IEEEeqnarray}

Defining $B'=B+\frac{g_0}{\alpha\delta}$, for the (II), we can write:

\begin{equation}
\label{ap29}
{\rm{(II)}}=(\frac{g_0}{\alpha\delta})^u \int_{0}^{\infty}e^{-B' z}\times z^u \times {{}_{1}F_{1}}(m,1,C(1)z) \mathrm{d}z.
\end{equation}

From \cite{Int_2014}, we know the following:

\begin{equation}
\label{ap30}
\int_{0}^{\infty} e^{-wt} t^{e-1} {{}_{1}F_{1}}(a;v;xt) \mathrm{d}t = \Gamma(e)w^{-e}{{}_{2}F_{1}}(a,e;v;xw^{-1}),
\end{equation}
for $|w|>|x|$. It should be noted that for the values provided in Table \ref{ShEnv}, the condition $|B'|>|C(1)|$ will be satisfied in integral part of (\ref{ap29}). Therefore, we can rewrite (\ref{ap29}) as:

\begin{equation}
\label{ap31}
{\rm{(II)}}=(\frac{g_0}{\alpha\delta})^u \times \Gamma(u+1)\times {B'} ^{-(u+1)}\times {{}_{2}F_{1}} (m,u+1,1,\frac{C(1)}{B'}).
\end{equation}
Finally, by substituting (\ref{ap28}) and (\ref{ap31}) into (\ref{ap27}), we have:

\begin{IEEEeqnarray}{rCl}
\label{ap32}
P_{\rm cap}(k)&=& 1 - AD\sum\limits_{i=0}^{\infty}c_i  \sum\limits_{n=0}^{\infty}\frac{(m)_n}{n!n!}C(n) \times (\frac{1}{B})^{n+1} \Gamma (n+1)\nonumber \\
&& + AD\sum\limits_{i=0}^{\infty}c_i \sum\limits_{u=0}^{k+i-1}\frac{1}{u!} (\frac{g_0}{\alpha\delta})^u \times \Gamma(u+1)\times {B'} ^{-(u+1)}\nonumber\\
&&\times {{}_{2}F_{1}} (m,u+1,1,\frac{C(1)}{B'}).
\label{eq:dont_use_multline}
\end{IEEEeqnarray}

\bibliographystyle{IEEEtran}
\bibliography{IoT_Transmission_Techs_PhD_V11}

\end{document}